\newcommand{\cl}{\ensuremath{\mathcal{L}}}
\newcommand{\R}{\ensuremath{\mathcal{R}}}
\newcommand{\I}{\ensuremath{\mathcal{I}}}
\newcommand{\A}{\ensuremath{\mathcal{A}}}
\renewcommand{\P}{\ensuremath{\mathsf{P}}}
\newcommand{\NP}{\ensuremath{\mathsf{NP}}}
\newcommand{\XP}{\ensuremath{\mathsf{XP}}}
\newcommand{\FPT}{\ensuremath{\mathsf{FPT}}}
\newcommand{\W}{\ensuremath{\mathsf{W[1]}}}
\renewcommand{\O}{\ensuremath{\mathcal{O}}}
\newtheorem{property}{Property}
\crefname{property}{Property}{Properties}
\Crefname{property}{Property}{Properties}
\newtheorem{problem}{Problem}
\theoremstyle{claimstyle}
\newtheorem{claim1}{Claim}
\crefname{claim1}{Claim}{Claims}
\Crefname{claim1}{Claim}{Claims}
\newcommand{\fullversion}[1]{%
  \@bsphack
  \@esphack
}
\renewcommand{\fullversion}[1]{#1}
\title{Graph Search Trees and the Intermezzo Problem}
\author{Jesse Beisegel}{Institute of Mathematics, Brandenburg University of Technology, Cottbus, Germany}{jesse.beisegel@b-tu.de}{https://orcid.org/0000-0002-8760-0169}{}
\author{Ekkehard Köhler}{Institute of Mathematics, Brandenburg University of Technology, Cottbus, Germany}{ekkehard.koehler@b-tu.de}{}{}
\author{Fabienne Ratajczak}{Institute of Mathematics, Brandenburg University of Technology, Cottbus, Germany}{fabienne.ratajczak@b-tu.de}{https://orcid.org/0000-0002-5823-1771}{}
\author{Robert Scheffler}{Institute of Mathematics, Brandenburg University of Technology, Cottbus, Germany}{robert.scheffler@b-tu.de}{https://orcid.org/0000-0001-6007-4202}{}
\author{Martin Strehler}{Department of Mathematics, Westsächsische Hochschule Zwickau, Zwickau, Germany}{martin.strehler@fh-zwickau.de}{https://orcid.org/0000-0003-4241-6584}{}
\authorrunning{J. Beisegel, E. Köhler, F. Ratajczak, R. Scheffler, and M. Strehler} 
\keywords{graph search trees, intermezzo problem, algorithm, parameterized complexity} 
\begin{document}

\maketitle

\begin{abstract}
The last in-tree recognition problem asks whether a given spanning tree can be derived by connecting each vertex with its rightmost left neighbor of some search ordering. In this study, we demonstrate that the last-in-tree recognition problem for Generic Search is \NP-complete. We utilize this finding to strengthen a complexity result from order theory. Given a partial order $\pi$ and a set of triples, the \NP-complete intermezzo problem asks for a linear extension of $\pi$ where each first element of a triple is not between the other two. We show that this problem remains \NP-complete even when the Hasse diagram of the partial order forms a tree of bounded height. In contrast, we give an \XP-algorithm for the problem when parameterized by the width of the partial order. Furthermore, we show that -- under the assumption of the Exponential Time Hypothesis -- the running time of this algorithm is asymptotically optimal.
\end{abstract}

\section{Introduction}

In the realm of computational combinatorics, one of the primary challenges is to determine a feasible configuration based on incomplete information. This paper aims to elucidate the 
relationships between two notable instances of this problem category: recognition of search trees of graph searches and total ordering with constraints. Specifically, our focus will be on exploring the last-in-tree recognition in the context of generic search and the intermezzo problem.
\subparagraph{Graph Searches.} Graph searches like \emph{Breadth First Search} (BFS) or \emph{Depth First Search} (DFS) are among the most basic algorithms in computer science. Their simplicity belies their significance as they form the backbone of more complex algorithms used to compute key properties of graphs. For instance, DFS can be employed to test for planarity as demonstrated by Hopcraft and Tarjan \cite{hopcrafttarjan74} and 
\emph{Lexicographic Breadth First Search} (LBFS) aids in the recognition and minimum coloring of chordal graphs through a perfect elimination ordering~\cite{rosetarjanlueker76}. Notably, all the above mentioned algorithms operate in linear time, underscoring their efficiency.

In this context, \emph{Generic Search} (GS) represents the most general form of a graph search\fullversion{ (see Figure~\ref{fig:gs})}, with connectivity being its sole constraint: To elaborate, starting from a root vertex $r$, every subsequently visited vertex merely needs to be adjacent to a previously visited vertex. Consequently, GS can yield any total order of the vertices, provided each prefix is connected. A search methodology that bears a close resemblance to GS is the \emph{Maximum Neighborhood Search} (MNS) \cite{corneil2008unified}, which can be perceived as a lexicographic variant of GS. Similarly, BFS and DFS can be implemented by using a queue and a stack, respectively, to store vertices that have not yet been visited.

\fullversion{
\begin{figure}[ht!]
	\centering
	\begin{tikzpicture}[scale=0.5]
	\node (gs) at (3,5.5) {GS};%
	\node (bfs) at (0,4) {BFS};%
	\node (dfs) at (6,4) {DFS};%
	\node (mns) at (3,4) {MNS};%
	\node (mcs) at (3,2) {MCS};%
	\node (lbfs) at (0,2) {LBFS};%
	\node (ldfs) at (6,2) {LDFS};%
	\draw[-stealth'] (gs)--(bfs);%
	\draw[-stealth'] (gs)--(mns);%
	\draw[-stealth'] (gs)--(dfs);%
	\draw[-stealth'] (bfs)--(lbfs);%
	\draw[-stealth'] (mns)--(mcs);%
	\draw[-stealth'] (dfs)--(ldfs);%
	\draw[-stealth'] (mns)--(lbfs);%
	\draw[-stealth'] (mns)--(ldfs);%
	\end{tikzpicture}\caption{Relationships between graph searches. The arrows represent proper inclusions.}\label{fig:gs}
\end{figure}
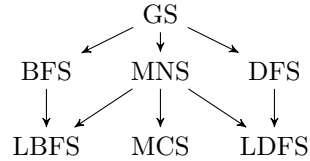
}
\fullversion{\noindent}Recognizing the significance of basic graph search algorithms such as BFS or DFS, recent efforts have been directed towards a deeper understanding of these algorithms. The primary focus of these studies revolves around two structures: end vertices and search trees (for a summary of known results see \cite[Tables~1 and~2]{scheffler2023ready}).
Given a graph $G$ and a specific search rule (e.g., BFS or DFS), the \emph{End Vertex Problem} aims to identify potential final vertices of the search. For GS, solving the end vertex problem is relatively straightforward. As long as a vertex $v$ is not an articulation point, i.e., $G - v$ remains connected, $v$ can serve as an end vertex of GS~\cite{charbithabibmamcarz14}. However, the end vertex problem is \NP-complete for all other common search rules on general graphs \cite{beisegel2019end,charbithabibmamcarz14,corneil2010,zou2022end}. By restricting to special graph classes, linear-time algorithms have been developed to solve this problem, e.g., for BFS on split graphs \cite{charbithabibmamcarz14}, for DFS on interval graphs \cite{beisegel2019end}, and for MNS on chordal graphs~\cite{beisegel2019end}.

Given a graph $G$ and a spanning tree $T$, the \emph{Tree Recognition Problem} seeks to determine whether $T$ can be derived as a search tree. In essence, it questions the feasibility of reconstructing a linear order of vertices from the tree. This problem is typically studied in two variants: first-in-trees and last-in-trees~\cite{beisegel2021recognition}.
In first-in-trees, each vertex is connected to its neighbor that appears first in the search order. Conversely, in last-in-trees (or \cl-trees), each vertex $v$ is a child of its neighbor that appears last before $v$ in the search order. Normally, first-in-trees are used for BFS and last-in-trees for DFS, with existing linear-time algorithms capable of recognizing the corresponding trees in both cases~\cite{hagerup1985biconnected,hagerup1985recognition,korach1989dfs,manber1990recognizing}.
Interestingly, the problem becomes \NP-complete when the search-tree paradigms are swapped between these searches, i.e., using last-in-trees for BFS and first-in-trees for DFS~\cite{scheffler2022recognition}.  Furthermore, Scheffler~\cite{scheffler2023ready} shows that the first-in-tree recognition problem of GS can be solved in linear~time.
\subparagraph{Total Ordering.} A well-known theorem in order theory states that any partial order can be extended to a linear order. This holds true even for infinite sets, as demonstrated by Szpilrajn (Marczewski) through the use of the axiom of choice~\cite{marczewski}.\footnote{He also references unpublished proofs by Banach, Kuratowski, and Tarski.}  The process simplifies considerably for finite sets, where topological sorting algorithms can determine such an extension in linear time~\cite{topologicalsort}.

While partial orders are typically defined by a binary relation, total order problems offer a more general perspective. Here, one is given a set $A$, a family $\cal B$ of subsets $A_i\subseteq A$, and for each $A_i\in\cal B$ one or more valid orderings of the elements within $A_i$. The objective is to ascertain a total order of the elements in $A$ that adheres to all these constraints.

Among the problems, the \emph{Betweenness Problem} and the \emph{Cyclic Ordering Problem} are particularly noteworthy. These two problems have already been discussed in the seminal textbook by Garey and Johnson~\cite{gareyjohnson}. 
In the betweenness problem, we are presented with triples $(a,b,c)$, and the only valid configurations are $a<b<c$ or $c<b<a$. In simpler terms, $b$ must be positioned between $a$ and $c$.
The cyclic ordering problem involves given triples $(a,b,c)$ for which there are three feasible orderings: $a<b<c$, $b<c<a$, or $c<a<b$.
As the appearance in Garey and Johnson’s book already suggests, both of these problems are indeed \NP-complete.    

In~\cite{guttmann2006variations}, Guttmann and Maucher systematically categorized total ordering problems based on pairs and triples. They also introduced the term \emph{Intermezzo} to describe a specific variant: given pairs $(b,c)$ where $b<c$, and triples $(a,b,c)$ where either $a<b<c$ or $b<c<a$, implying that $a$ is not placed between $b$ and $c$. Note that a partial order is defined by both pairs and triples through the relation $b < c$. This problem has been proven to be \NP-complete. 
\subparagraph{Interconnections.} In this context, the problems of identifying end vertices and search trees are interconnected with the total ordering problem, given the underlying vertex order. The end vertex problem asks if a vertex can be the maximal element within this order. On the other hand, the correct search order offers a certificate for the search tree problem that can be checked in linear time. However, the constraints, which include all valid search orders and could potentially be exponential in number, are not explicitly given. Instead, they are implicitly defined by the underlying search paradigm.

Recently, Scheffler~\cite{scheffler2022linearizing} introduced the more general problem of linearizing partial orders where the resultant total order must serve as a search order of a specified graph $G$. He presents polynomial-time algorithms for this problem for several searches and graph classes. In particular, he shows that the problem can be solved for GS on general graphs using a simple greedy algorithm. These results generalize the polynomial-time algorithms for the end vertex problem, given that the partial order can be selected to determine the end vertex.
\subparagraph*{Our Contribution.} After providing the necessary notation, we prove \NP-completeness of the \cl-tree problem for Generic Search (GS) in Section~\ref{sec:ltree}. It is worth noting that two aspects of this result may appear surprising: Firstly, for GS all other problems considered so far can be solved in polynomial time with straightforward methods. Secondly, until now, for any given combination of a search rule (such as BFS, DFS, etc.) and a graph class (like chordal, interval, split, etc.), both tree recognition problems have not been harder than the end vertex problem. Thus, GS on general graphs  represents the first known instance where the end vertex problem is simpler than a tree-recognition problem. 
We use the \NP-completeness of the \cl-tree problem of GS in Section~\ref{sec:intermezzo} to show that the Intermezzo Problem is also \NP-complete even if the partial order $\pi$ is a cs-tree or the height of $\pi$ is bounded. In contrast, we give an \XP-algorithm for the problem when parameterized by the width of $\pi$. Under the assumption of the Exponential Time Hypothesis, we show that the running time of this algorithm is asymptotically optimal.

\section{Preliminaries}\label{sec:prelim}

All the graphs that we consider are simple, finite, non-empty and undirected. Given a graph~$G$, we denote by $V(G)$ the set of \emph{vertices} and by $E(G)$ the set of \emph{edges}.
 
A path $P $ of $G$ is a non-empty subgraph of $G$ with $V(P) = \{v_1, \ldots , v_k\}$ and $E(P) = \{v_1v_2 , \ldots , v_{k-1}v_k\}$, where $v_1, \ldots, v_k$ are all distinct. We will sometimes denote such a path by $v_1 - v_2 - \ldots - v_{k-1} - v_k$.
A graph $G$ is called a \emph{tree} if it is connected and does not contain a cycle. A \emph{spanning tree} $T$ is a subgraph of a graph $G$ which is a tree with $V(T)=V(G)$. A tree together with a distinguished \emph{root vertex} $r$ is said to be \emph{rooted}. In such a rooted tree a vertex $v$ is an \emph{ancestor} of vertex $w$ if $v$ is an element of the unique path from $w$ to the root~$r$. In particular, if $ v $ is adjacent to $ w $, it is called the \emph{parent} of $ w $. Furthermore, a vertex $ w $ is called the \emph{descendant (child)} of $ v $ if $ v $ is the ancestor (parent) of $ w $. We define the \emph{height of a rooted tree} as the maximum number of edges of a path from the root $r$ to any other vertex.
A graph is a \emph{split graph} if its vertex set can be partitioned into a clique and an independent set.

Given a set $X$, a \emph{(binary) relation} $\R$ on $X$ is a subset of the set $X^2 = \{(x,y)~|~x,y \in X\}$. The set $X$ is called the \emph{ground set of \R}. The \emph{reflexive and transitive closure} of a relation $\R$ is the smallest relation $\R'$ such that $\R \subseteq \R'$ and $\R'$ is reflexive and transitive. A \emph{partial order} $ \pi $ on a set $X$ is a reflexive, antisymmetric and transitive relation on $X$. The tuple $(X, \pi)$ is then called a \emph{partially ordered set}. We also denote $(x,y) \in \pi$ by $x \prec_\pi y$ if $x \neq y$.
A \emph{minimal element} of a partial order $\pi$ on $X$ is an element $x \in X$ for which there is no element $y \in X$ with $y \prec_\pi x$.
A \emph{chain} of a partial order $\pi$ on a set $X$ is a set of elements $\{x_1,\ldots,x_k\} \subseteq X$ such that $x_1 \prec_\pi x_2 \prec_\pi \ldots \prec_\pi x_k$. The \emph{height} of $\pi$ is the number of elements of the largest chain of $\pi$. An \emph{antichain} of $\pi$ is a set of elements $\{x_1,\ldots,x_k\} \subseteq X$ such that $x_i \not\prec_\pi x_j$ for any $i, j \in \{1, \ldots, k\}$ 
. The \emph{width} of $\pi$ is the number of elements of the largest antichain of $\pi$.

A \emph{linear ordering} of a finite set $X$ is a bijection $\sigma: X \rightarrow \{1,2,\dots,|X|\}$. We will often refer to linear orderings simply as orderings. Furthermore, we will denote an ordering by a tuple $(x_1, \ldots, x_n)$ which means that $\sigma(x_i) = i$. Given two elements $x$ and $y$ in $X$, we say that $x$ is \emph{to the left} (resp. \emph{to the right}) of $y$ if $\sigma(x)<\sigma(y)$ (resp. $\sigma(x)>\sigma(y)$) and we denote this by $x \prec_{\sigma}y$ (resp.  $x \succ_{\sigma}y$).

A \emph{vertex ordering} of a graph $G$ is a linear ordering of the vertex set $ V(G)$. A vertex ordering $\sigma = (v_1, \ldots, v_n)$ is called \emph{connected} if for any $i \in \{1, \ldots, n\}$ the graph $G[v_1, \ldots, v_i]$ is connected. In this paper, a \emph{graph search} is an algorithm that, given a graph $G$ as input, outputs a connected vertex ordering of $G$.
The graph search that is able to compute any such ordering is called \emph{Generic Search (GS)}.

\section[Complexity of the L-tree Recognition Problem]{Complexity of the \cl-tree Recognition Problem}\label{sec:ltree}

The definition of the term \emph{search tree} varies between different paradigms. However, typically, it consists of the vertices of the graph and, given the search ordering $ (v_1, \ldots , v_n) $, for each vertex $ v_i $ exactly one edge to a $ v_j \in N(v_i) $ with $ j < i $. By specifying to which of the previously visited neighbors a new vertex is adjacent in the tree, we can define different types of graph search trees.
For example, in DFS trees a vertex $ v $ is adjacent to the rightmost neighbor to the left of $ v $. This motivates the following definition.

\begin{definition}
    Given a search ordering $ \sigma:=(v_1, \ldots , v_n) $ of a graph search on a connected graph $ G $, we define the \emph{last-in tree} (or \cl-tree) to be the tree consisting of the vertex set $ V(G) $ and an edge from each vertex $ v_i $ to its rightmost neighbor $ v_j $ in $ \sigma $ with $ j < i $.
\end{definition}
As explained above, for a classical DFS the tree $T$ is an \cl-tree with respect to~$\sigma$. Given this definition, we can state the following decision problem.

\begin{problem}[\cl-Tree Recognition Problem of graph search $\A$]~
    \begin{description}
        \item[Instance:] A connected graph $ G $ and a spanning tree $ T $ of $G$.
        \item[Question:] Is there a graph search ordering of $\A$ such that $ T $ is its \cl-tree of $ G $?
    \end{description}
    \label{prob:tree}
\end{problem}

\noindent Note that we have defined the \cl-tree recognition problem without a given start vertex for the search. It is also possible to define this problem with a fixed start vertex and we call this the \emph{rooted \cl-tree recognition problem}. Obviously, a polynomial-time algorithm for the rooted tree recognition problem yields a polynomial-time algorithm for the general problem by simply repeating the procedure for all vertices. The other direction, however, is not necessarily true.

The \cl-tree recognition problem of GS raised in~\cite{scheffler2022recognition} and~\cite{scheffler2024recognition} is an open problem and in the following we will show that it is in fact \NP-complete. This result will also answer another open question, as Scheffler showed in~\cite{scheffler2022recognition} that the \cl-tree recognition problem of BFS for split graphs is at least as hard as that of GS.

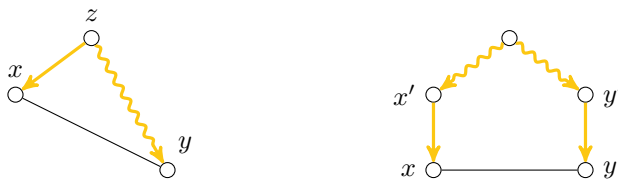
\begin{figure}
\centering
\begin{tikzpicture}[vertex/.style={inner sep=2pt,draw,circle}, yscale=0.5]

\node[vertex, label=$z$] (z) at (1,-0.5) {};
\node[vertex, label= 80:$y$] (y) at (2,-4) {};
\node[vertex, label= $x$] (x) at (0,-2) {};
\draw[lipicsYellow,-stealth',very thick,decorate, decoration={snake,amplitude=.4mm,segment length=2mm,post length=2mm}] (z) --(y); 
\draw[lipicsYellow,-stealth',very thick] (z)--(x);
\draw(x)--(y);

\node[vertex] (r) at (6.5,-0.5) {};
\node[vertex, label=0:$y'$] (y'1) at (7.5,-2) {};
\node[vertex, label= 180:$x$] (x1) at (5.5,-4) {};
\node[vertex, label= 0:$y$] (y1) at (7.5,-4) {};
\node[vertex, label= 180:$x'$] (x'1) at (5.5,-2) {};
\draw[lipicsYellow,-stealth',very thick] (x'1)--(x1);
\draw[lipicsYellow,-stealth',very thick](y'1)--(y1);
\draw(y1)--(x1);

\draw[lipicsYellow,-stealth',very thick,decorate, decoration={snake,amplitude=.4mm,segment length=2mm,post length=2mm}] (r) --(x'1);
\draw[lipicsYellow,-stealth',very thick,decorate, decoration={snake,amplitude=.4mm,segment length=2mm,post length=2mm}] (r) --(y'1);
\end{tikzpicture}
    \caption{On the left is an example of a \emph{hook configuration}. On the right is an example of a \emph{U-bend}. The yellow edges symbolize edges of the spanning tree, the black edges are non-tree edges of the graph and the wavy line represents a directed path in the spanning tree.}\label{fig:hook}
\end{figure}
An important property of \cl-trees of GS can be derived from the non-tree edges that connect vertices of different branches of the tree.

\begin{lemma}\label{lemma:edge}
    Let $T$ be a spanning tree of a graph $G$ rooted in $r$. Let $xy$ be an edge in $E(G) \setminus E(T)$ and let $x'$ and $y'$ be the parents of $x$ and $y$ in $T$, respectively. If $T$ is an \cl-tree of a GS ordering $\sigma$ starting with $r$, then it either holds that $x' \prec_\sigma x \prec_\sigma y' \prec_\sigma y$ or $y' \prec_\sigma y \prec_\sigma x' \prec_\sigma x$.
\end{lemma}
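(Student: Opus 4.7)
The claim unpacks cleanly from the defining property of the \cl-tree, so the plan is mostly a matter of organizing definitions rather than exploiting a deep structural fact about GS.

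First I would break the symmetry of the statement by assuming without loss of generality that $x \prec_\sigma y$; the opposite case will give the second conclusion by swapping the roles of $x$ and $y$. Under this assumption, the two conclusions I need are $x' \prec_\sigma x$, $x \prec_\sigma y'$, and $y' \prec_\sigma y$.

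Next I would observe that the two ``parent before child'' inequalities $x' \prec_\sigma x$ and $y' \prec_\sigma y$ are immediate from the definition of the \cl-tree: the parent in $T$ of any non-root vertex $v$ is, by definition, the rightmost neighbor of $v$ that precedes $v$ in $\sigma$, and in particular precedes $v$. The only non-trivial inequality is therefore $x \prec_\sigma y'$. To establish it, I would use the edge $xy$ itself: since $x \prec_\sigma y$ and $x$ is a neighbor of $y$, $x$ is a candidate for the rightmost neighbor of $y$ lying to the left of $y$ in $\sigma$. That rightmost neighbor is $y'$, so $x \preceq_\sigma y'$.

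Finally, I would rule out equality: if $x = y'$, then the edge $xy$ would coincide with the tree edge $y'y$ and hence belong to $E(T)$, contradicting the assumption $xy \in E(G)\setminus E(T)$. Therefore $x \prec_\sigma y'$, and chaining the three strict inequalities yields $x' \prec_\sigma x \prec_\sigma y' \prec_\sigma y$. There is no real obstacle here; the only thing one has to be a bit careful about is remembering that the \cl-tree is defined so that the parent is the \emph{rightmost} (not arbitrary) preceding neighbor, which is exactly what forces $x \preceq_\sigma y'$ rather than only $x \prec_\sigma y$.
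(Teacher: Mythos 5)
Your proof is correct and rests on the same observation as the paper's: the parent in the \cl-tree is the \emph{rightmost} preceding neighbor, so no neighbor of $y$ may sit strictly between $y'$ and $y$. The paper organizes the case split around the relative order of $x'$ and $y'$ rather than of $x$ and $y$, but this is only a cosmetic difference; your direct derivation of $x \preceq_\sigma y'$ followed by excluding equality via $xy \notin E(T)$ is a perfectly valid (and slightly cleaner) rendering of the same argument.
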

\fullversion{\begin{proof}
    Suppose that $x' \prec_\sigma y'$. If $y' \prec_\sigma x \prec_\sigma y$, then $y'$ would not be the parent of $y$ in $T$; a contradiction. Otherwise, if $y' \prec_\sigma y \prec_\sigma x $, then $x'$ would not be the parent of $x$ in $T$; again a contradiction. Therefore, we see with the orders implied by the tree edges that $x' \prec_\sigma x \prec_\sigma y' \prec_\sigma y$. Supposing that $y' \prec_\sigma x'$, we see by symmetry that $y' \prec_\sigma y \prec_\sigma x' \prec_\sigma x$, which proves the lemma.
\end{proof}}
The configuration described in~\Cref{lemma:edge} will be called a \emph{U-bend configuration} (see~\Cref{fig:hook} on the right).

Before we begin with our main results, we should analyze examples of some rooted spanning trees that cannot be \cl-trees of GS. One of the smallest examples can be found to the left in~\Cref{fig:counter-example}. The example on the right is a generalization with arbitrary many branches of the spanning tree. These examples can be easily described using a concept called \emph{hook configuration}. This is a special case of a U-bend where the parent of one vertex is an ancestor of the others.

\begin{definition}
    Let $T$ be a spanning tree of a graph $G$ rooted in $r \in V(G)$. We say that a triple of vertices $x$, $y$, and $z$ forms a \emph{hook configuration} or a \emph{hook}  if $z$ is the parent of $x$ in $T$, $xy \in E(G) \setminus E(T)$ and $y$ is a descendant of $z$ but $y$ is not a descendant of $x$ (see \Cref{fig:hook} on the left). We call $x$ the \emph{point} and $y$ the \emph{eye} of the hook.
\end{definition}
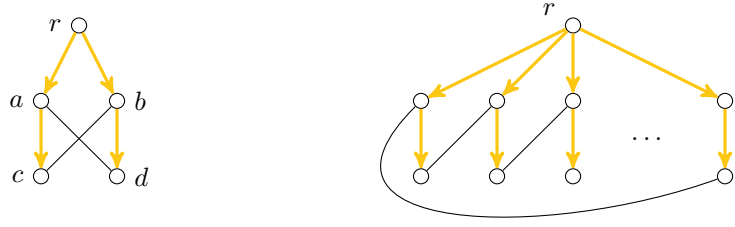
\begin{figure}
    \centering
    \begin{tikzpicture}[vertex/.style={inner sep=2pt,draw,circle}]
        \node[vertex, label=180:$r$] (r) at (1.5,3) {};
        \node[vertex, label= 180:$a$] (a) at (1,2) {};
        \node[vertex, label=0:$b$] (b) at (2,2) {};
        \node[vertex, label=180:$c$] (c) at (1,1) {};
        \node[vertex, label=0:$d$] (d) at (2,1) {};
        \draw[lipicsYellow,-stealth',very thick] (r) -- (a);
        \draw[lipicsYellow,-stealth',very thick] (a) -- (c);
        \draw[lipicsYellow,-stealth',very thick] (r) -- (b);
        \draw[lipicsYellow,-stealth',very thick] (b) -- (d);
        \draw (a) -- (d);
        \draw (b) -- (c);

        \node[vertex,label=170:$r$] (r1) at (8,3) {};
        \node[vertex] (a1) at (6,2) {};
        \node[vertex] (b1) at (7,2) {};
        \node[vertex] (c1) at (6,1) {};
        \node[vertex] (d1) at (7,1) {};
        \node[vertex] (e1) at (8,2) {};
        \node[vertex] (f1) at (8,1) {};
        \node[vertex] (g1) at (10,2) {};
        \node[vertex] (h1) at (10,1) {};
        \draw[lipicsYellow,-stealth',very thick] (r1) -- (a1);
        \draw[lipicsYellow,-stealth',very thick] (a1) -- (c1);
        \draw[lipicsYellow,-stealth',very thick] (r1) -- (b1);
        \draw[lipicsYellow,-stealth',very thick] (b1) -- (d1);
        \draw[lipicsYellow,-stealth',very thick] (r1) -- (e1);
        \draw[lipicsYellow,-stealth',very thick] (e1) -- (f1);
        \draw[lipicsYellow,-stealth',very thick] (r1) -- (g1);
        \draw[lipicsYellow,-stealth',very thick] (g1) -- (h1);
       \path (8,1.5) -- (10,1.5) node [midway,sloped] {$\dots$};
        \draw (b1) -- (c1);
        \draw (e1) -- (d1);
        \draw[looseness=1.5,out=225,in=200] (a1) to (h1);
    \end{tikzpicture}
    \caption{Family of graphs where the rooted spanning trees (yellow edges) are not \cl-trees of GS.}
    \label{fig:counter-example}
\end{figure}
These hook configurations have a strong a priori effect on the sequence of any search ordering corresponding to that tree.

\begin{lemma}\label{lemma:hook}
    Let $x$ and $y$ be part of a hook configuration of $T$ rooted in $r \in V(G)$ with point $x$ and eye $y$. Then for any GS ordering $\sigma$ starting in $r$ with \cl-tree $T$ it holds that $x \prec_\sigma y$.
\end{lemma}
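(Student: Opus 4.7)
The plan is to apply \Cref{lemma:edge} to the non-tree edge $xy$ and then eliminate one of the two resulting configurations via a simple ancestor--descendant argument that is built into the definition of an \cl-tree.

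First, I would set up the notation required by \Cref{lemma:edge}. Let $z$ be the parent of $x$ (which is part of the hook configuration by definition), and let $y'$ denote the parent of $y$ in $T$. Since $y$ is a descendant of $z$ but not of $x$, the subtree of $T$ containing $y$ (hanging off some child of $z$ distinct from $x$) contains $y'$ as well; in particular $y' \neq x$, and $y'$ is either equal to $z$ or a proper descendant of $z$. The edge $xy$ lies in $E(G) \setminus E(T)$ by the definition of a hook, so all the hypotheses of \Cref{lemma:edge} are satisfied with $x' = z$.

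Next, I would invoke \Cref{lemma:edge}, which yields exactly two possible relative positions in $\sigma$: either
\[
    z \prec_\sigma x \prec_\sigma y' \prec_\sigma y \qquad\text{or}\qquad y' \prec_\sigma y \prec_\sigma z \prec_\sigma x.
\]
In the first case we immediately get $x \prec_\sigma y$, which is what we want. So it remains to rule out the second case.

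The key observation to exclude the second case is that in any \cl-tree of a search ordering $\sigma$, every non-root vertex $v$ appears strictly after its parent in $\sigma$, because the parent is by construction a neighbor of $v$ that occurs before $v$. Iterating this along the unique tree path from $z$ down to $y$ (which exists because $y$ is a descendant of $z$) gives $z \prec_\sigma y$. This directly contradicts the conclusion $y \prec_\sigma z$ of the second case. Hence only the first case can occur and the lemma follows. I do not foresee any real obstacle here: once the U-bend lemma is in hand, the argument is essentially forced, with the ancestor-precedence property doing all the remaining work.
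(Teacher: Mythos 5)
Your proof is correct, but it follows a different route from the paper's. The paper argues directly from the definition of an \cl-tree: since $y$ is a descendant of $z$ in $T$ and every vertex follows its \cl-tree parent in $\sigma$, we have $z \prec_\sigma y$; if additionally $y \prec_\sigma x$, then $y$ would be a neighbor of $x$ lying strictly between $z$ and $x$, so $z$ could not be the rightmost left-neighbor of $x$ and hence not its parent in $T$ -- a two-line contradiction. You instead invoke \Cref{lemma:edge} on the non-tree edge $xy$ and then eliminate the second alternative with the same ancestor-precedence observation $z \prec_\sigma y$. Both arguments are sound, and yours makes explicit that a hook is literally a U-bend whose second case is excluded by the ancestor relation, which matches the paper's informal remark that a hook is a special case of a U-bend. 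The only point worth flagging is the degenerate situation $y' = z$ (i.e., $y$ a sibling of $x$): there both alternatives in the conclusion of \Cref{lemma:edge} are impossible, so that lemma is really telling you that no such $\sigma$ exists and the claim holds vacuously; your argument still goes through logically, but the paper's direct proof sidesteps this case entirely.
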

\fullversion{\begin{proof}
   Let $z$ be the parent of $x$. It holds that $z \prec_\sigma y$ since $y$ is a descendant of $z$ in $T$. If $y \prec_\sigma x$, then $xz$ would not have been chosen as a tree edge, since $y$ was chosen after $z$; this is a contradiction to the assumption that $z$ is the parent of $x$.
\end{proof}}
For the examples shown in~\Cref{fig:counter-example}, it is possible to see that the hook configurations create something like a cycle in the ordering using~\Cref{lemma:hook}: We see that $a \prec d$ and $b \prec c$ and these contradict each other because of the tree edges.

In the special case that the graph together with its spanning tree does not contain any hooks, it is trivial to decide the \cl-Tree Recognition Problem.

\begin{theorem}\label{theorem:no-hook}
    Let $T$ be a spanning tree of a graph $G$ rooted in $r \in V(G)$. If there is no hook configuration, then any DFS ordering of $T$ starting in $r$ is a GS ordering of $G$ with $\cl$-tree $T$. Therefore, any such tree together with $G$ is a \textsc{Yes}-instance for the \cl-Tree Recognition Problem of Generic Search.
\end{theorem}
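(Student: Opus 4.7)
The plan is to take an arbitrary DFS ordering $\sigma$ of $T$ rooted at $r$ and verify two things: that $\sigma$ is a legitimate GS ordering of $G$, and that its $\cl$-tree is exactly $T$. The first point is immediate, since every DFS prefix on the tree $T$ is connected in $T$ and therefore connected in the supergraph $G$, so the GS connectivity condition holds throughout.

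The substance of the proof is the second point. For each non-root vertex $v$ I must show that the $T$-parent $p(v)$ is the rightmost neighbor of $v$ in $\sigma$ among vertices that precede $v$. I would proceed by contradiction: suppose some neighbor $u$ of $v$ in $G$ satisfies $p(v) \prec_\sigma u \prec_\sigma v$. The crucial structural observation is that in a DFS of $T$, the vertices lying strictly between $p(v)$ and $v$ in $\sigma$ are precisely those contained in the subtrees rooted at the siblings of $v$ that $\sigma$ explores before $v$; consequently every such vertex is a descendant of $p(v)$ in $T$ but is not a descendant of $v$. Since the only tree-neighbor of $v$ earlier than $v$ in $\sigma$ is $p(v)$ itself, the edge $uv$ cannot be a tree edge. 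Hence the triple $(v,u,p(v))$ constitutes a hook with point $v$ and eye $u$, contradicting the hook-free hypothesis. Therefore $p(v)$ is indeed the rightmost left neighbor of $v$ in $\sigma$, which means the $\cl$-tree of $\sigma$ coincides with $T$.

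For the linear-time algorithmic claim I would precompute DFS discovery and finish times on $T$ once, then iterate through each non-tree edge $uv$ and detect in $O(1)$ per edge (via ancestor tests based on these timestamps) whether either endpoint is the point of a hook; if no hook is present, the DFS ordering itself serves as a positive certificate. The only real hurdle in the argument is pinning down the structural claim about DFS orderings, namely that every vertex strictly between $p(v)$ and $v$ in $\sigma$ is a descendant of $p(v)$ lying outside the subtree of $v$. This is a straightforward unrolling of the DFS recursion, but once it is in hand the hook configuration appears automatically and the rest of the proof is a one-line contradiction.
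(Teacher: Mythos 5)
Your proof is correct and follows the same overall strategy as the paper: assume some non-tree neighbor $u$ of $v$ intrudes between $p(v)$ and $v$ in the DFS ordering of $T$, and show this forces a hook configuration, contradicting the hypothesis. The one place where you diverge is in how you certify that the intruding vertex $u$ is a descendant of $p(v)$ but not of $v$: the paper picks the leftmost violated vertex and invokes the DFS \emph{four point condition} (and its path corollary from Corneil--Krueger) to deduce the descendant relation, whereas you argue directly from the contiguity of subtree intervals in a DFS of the tree $T$ -- everything strictly between $p(v)$ and $v$ lies in sibling subtrees explored earlier. Your route is more elementary and self-contained (no extremal choice, no external lemma), at the cost of having to state and justify the interval property explicitly; the paper's route outsources that work to a known characterization of DFS orderings. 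Both are sound, and your observation that the only earlier tree-neighbor of $v$ is $p(v)$ (so $uv \in E(G)\setminus E(T)$) correctly completes the hook. Your sketch of the linear-time test via discovery/finish timestamps and $O(1)$ ancestor queries per non-tree edge is also a valid (and more explicit) justification of the algorithmic claim than the paper gives.
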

\begin{proof}
    Let $\sigma$ be a DFS ordering of $T$ starting in $r$. This ordering $\sigma$ fulfills the following property also called \emph{four point condition} (see for example~\cite{corneil2008unified}): If $a \prec_\sigma b \prec_\sigma c$ and $ac \in E$ and $ab \notin E$, then there exists a vertex $d$ with $a \prec_\sigma d \prec_\sigma b$ such that $db \in E$.

    Suppose that $\sigma$ does not induce the \cl-tree $T$ for $G$. Let $w$ be the leftmost vertex in $\sigma$ such that there exist $u$ and $v$ with $u \prec_\sigma v \prec_\sigma w$ with $uw \in E(T)$ and $vw \in E(G) \setminus E(T)$. If $uv$ is an edge in the tree $T$, then $u,v$ and $w$ form a hook configuration; a contradiction to the assumption. Therefore, we can assume that $uv \notin E(T)$. Now we can apply the four point condition to vertices $u,v$ and $w$ (note that the DFS was executed on $T$). A result of the four point condition, is the fact that there must exist a $u$-$v$-path $u = d_1- \dots - d_k = v$ with $u \prec_\sigma d_2 \prec_\sigma \dots \prec_\sigma d_{k-1}\prec_\sigma v$ (Corollary 2.6 in~\cite{corneil2008unified}). In particular, we see that $v$ is a descendant of $u$. Together with the fact that $uw \in T$ and $vw \in E(G) \setminus E(T)$, we see that $u,v$ and $w$ form a hook configuration; a contradiction to the assumptions of the theorem. This implies that each vertex in the search is connected to its correct parent in $T$, proving that $T$ is an \cl-tree of GS in $G$.
\end{proof}
This theorem could lead to the assumption that deciding whether a given spanning tree is an $\cl$-tree of GS only amounts to an analysis of all the hook configurations. In fact, it is easy to see that we can find all hook configurations in polynomial time.
However, it is not always so simple. In fact, we show in the following that in general the \cl-Tree Recognition Problem of GS is \NP-complete.
We describe a reduction from 3-SAT, i.e., we are given an instance $\mathcal{I}$ of 3-SAT and derive an instance $(G(\mathcal{I}),T(\mathcal{I}))$ for the \cl-Tree Recognition Problem of~GS.

Let $\mathcal{I}$ be an instance of 3-SAT with variable set $\{X_1, \ldots , X_n\}$ and clause set $\{C^1, \ldots , C^m\}$. For ease of notation, we define the positive literal $X_j$ as $X_j(1)$ and the negative literal $\lnot X_j$ as~$X_j(0)$. First we define the spanning tree $T(\mathcal{I})$ and then we add the edges missing to give the full graph $G(\mathcal{I})$. For each variable $X_j$ of $\mathcal{I}$ we add two vertices $x_j(0)$ and $x_j(1)$ (representing the two literals of $X_j$) to $V(G(\mathcal{I}))$. These vertices are all adjacent to a common root vertex $r$. Furthermore, $r$ is adjacent to the clause-hub-vertex~$C$. Now we add a vertex $c^i$ for each clause $C^i$ and connect it to $C$ (see~\cref{fig:variable} for a depiction of this setup.). Furthermore, for each occurrence of a literal associated with vertex $x_j(p)$ in a clause $C^i$ we add a vertex~$x_j^i$. As we may assume that only one of the literals appears in a clause, we do not have to use an index to mark whether the vertex belongs to a negated variable or not. This sums up the basic setup concerning the variables and clauses. However, for technical reasons we need several more vertices (see~\cref{fig:clause}):
\begin{itemize}
    \item For each $c^i$ we add the vertices $a^i_0, a^i_1, a^i_2,b^i_0, b^i_1, b^i_2$ in $T(\mathcal{I})$. These are called the \emph{technical vertices of} $c^i$. 
    \item These vertices form the paths $c^i - a^i_0 - b^i_0$, $c^i - a^i_1 - b^i_1$ and $c^i - a^i_2 - b^i_2$ in $T(\mathcal{I})$.
    \item For each $x^i_j$ we add vertices $d^i_j,e^i_j,f^i_j$. These are called the \emph{technical vertices of} $x^i_j$.
    \item These vertices form the path $x^i_j-d^i_j-e^i_j-f^i_j$ in $T(\mathcal{I})$. 
\end{itemize}

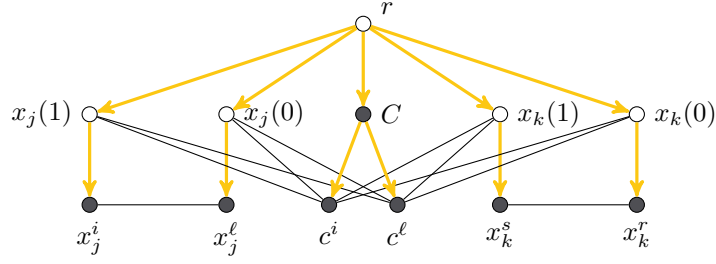
\begin{figure}
    \centering
    \begin{tikzpicture}[vertex/.style={inner sep=2pt,draw,circle},scale =0.6]
        \node[vertex, label=5:$r$] (r) at (0,0) {};
        \node[vertex, label=180:$x_j(1)$] (x1) at (-6, -2) {};
        \node[vertex, label=0:$x_j(0)$] (nx1) at (-3, -2) {};
        \node[vertex, fill=lipicsLineGray, label=0:$C$] (C) at (0, -2) {};
        \node[vertex, fill=lipicsLineGray, label=-90:$x_j^i$] (x1i) at (-6,-4) {};
        \node[vertex, fill=lipicsLineGray, label=-90:$ x_j^\ell$] (nx1i) at (-3,-4) {};
        \node[vertex, fill=lipicsLineGray, label=-90:$c^i$] (ci) at (-0.75,-4) {};

        \node[vertex, label=0:$x_k(1)$] (x2) at (3, -2) {};
        \node[vertex, label=0:$x_k(0)$] (nx2) at (6, -2) {};
        \node[vertex, fill=lipicsLineGray, label=-90:$x_k^s$] (x2i) at (3,-4) {};
        \node[vertex, fill=lipicsLineGray, label=-90:$ x_k^r$] (nx2i) at (6,-4) {};
        \node[vertex, fill=lipicsLineGray, label=-90:$c^\ell$] (cj) at (0.75,-4) {};

        \draw[lipicsYellow, very thick, -stealth'] (r) -- (x1);
        \draw[lipicsYellow, very thick, -stealth'] (r) -- (nx1);
        \draw[lipicsYellow, very thick, -stealth'] (r) -- (x2);
        \draw[lipicsYellow, very thick, -stealth'] (r) -- (nx2);
        \draw[lipicsYellow, very thick, -stealth'] (r) -- (C);
        \draw[lipicsYellow, very thick, -stealth'] (x1) -- (x1i);
        \draw[lipicsYellow, very thick, -stealth'] (nx1) -- (nx1i);
        \draw[lipicsYellow, very thick, -stealth'] (x2) -- (x2i);
        \draw[lipicsYellow, very thick, -stealth'] (nx2) -- (nx2i);

        \draw[black] (x1) -- (ci);
        \draw[black] (nx1) -- (ci);
        \draw[black] (x1) -- (cj);
        \draw[black] (nx1) -- (cj);
        \draw[black] (x2) -- (ci);
        \draw[black] (nx2) -- (ci);
        \draw[black] (x2) -- (cj);
        \draw[black] (nx2) -- (cj);
        \draw[black] (x1i) -- (nx1i);
        \draw[black] (x2i) -- (nx2i);

        \draw[lipicsYellow, very thick, -stealth'] (C) -- (ci);
        \draw[lipicsYellow, very thick, -stealth'] (C) -- (cj);
    \end{tikzpicture}
    \caption{This figure illustrates the variable gadget. The yellow edges are the tree edges, and the vertices marked in gray appear again in the clause gadget. The literal $X_j$ appears in clause $C^k$, the literal $\lnot X_j$ appears in clause $C^\ell$. Note that for the $x_j^i$ vertices we do not need to denote whether they belong to $X_j$ or its negation, as each clause only contains either $X_j$ or $\lnot X_j$.}
        \label{fig:variable}
        
\end{figure}

\noindent This concludes the definition of the tree $T(\mathcal{I})$, to which we will now add the remaining edges for $G(\mathcal{I})$ (see~\cref{fig:variable} and~\cref{fig:clause}):
\begin{itemize}
    \item For each $X_j$ we add edges $x_j(0)c^i$ and $x_j(1)c^i$ for $i \in \{1, \ldots, m\}$. 
    \item For each $x_j^i$ adjacent to $x_j$ we add an edge $x_j^i x_j^k$ if both $X_j$ occurs in clause $C^i$ and $\lnot X_j$ occurs in literal $C^k$.
    \item For any clause $C^i =\{X_{j_0}(p_0), X_{j_1}(p_1), X_{j_2}(p_2)\}$ we add the edges:
    \begin{itemize}
        \item $Ce^i_{j_p}$ for $p \in \{0,1,2\}$.
        \item $c^i e^i_{j_p}$ for $p \in \{0,1,2\}$.
        \item $a^i_p f^i_{j_p}$ for $p \in \{0,1,2\}$.
        \item $b^i_p e^i_{j_p}$ for $p \in \{0,1,2\}$.
        \item $b^i_p d^i_{j_{(p+1) \bmod 3}}$ for $p \in \{0,1,2\}$.
    \end{itemize} 
\end{itemize}

\noindent The modulo operation applied to the indices to define the edges in the clause gadget illustrates the circularity inherent in that gadget. Visiting one of the branches below $c^i$ effectively unlocks one of the branches below an $x^i_j$. Conversely, visiting an $x^i_j$ before $c^i$ blocks a corresponding branch of $c^i$. This effect is what leads to the property that we will show in~\cref{lem:clause}.

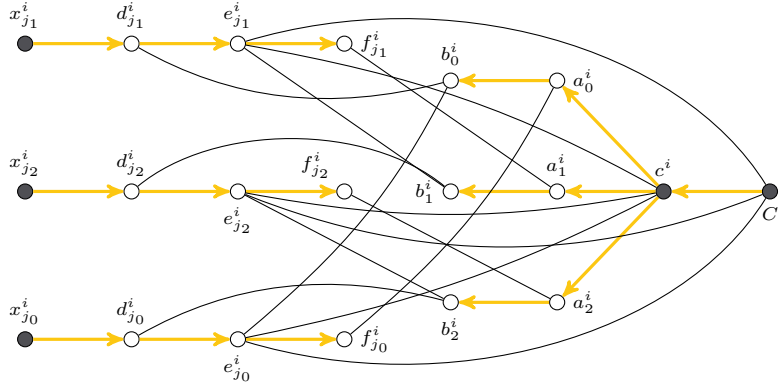
\begin{figure}
    \centering
    \begin{tikzpicture}[vertex/.style={inner sep=2pt,draw,circle},yscale = 0.7,scale =0.7,rotate=90]
    \scriptsize
        \node[vertex, fill=lipicsLineGray, label=90:$x^i_{j_0}$] (x1) at (0,0) {};
        \node[vertex, fill=lipicsLineGray, label=90:$x^i_{j_2}$] (y1) at (4,0) {};
        \node[vertex, fill=lipicsLineGray, label=90:$x^i_{j_1}$] (z1) at (8,0) {};
        \node[vertex, label=90:$d^i_{j_0}$] (x2) at (0,-2) {};
        \node[vertex, label=90:$d^i_{j_2}$] (y2) at (4,-2) {};
        \node[vertex, label=90:$d^i_{j_1}$] (z2) at (8,-2) {};
        \node[vertex, label=270:$e^i_{j_0}$] (x3) at (0,-4) {};
        \node[vertex, label=270:$e^i_{j_2}$] (y3) at (4,-4) {};
        \node[vertex, label=90:$e^i_{j_1}$] (z3) at (8,-4) {};
        \node[vertex, label=0:$f^i_{j_0}$] (x4) at (0,-6) {};
        \node[vertex, label=135:$f^i_{j_2}$] (y4) at (4,-6) {};
        \node[vertex, label=0:$f^i_{j_1}$] (z4) at (8,-6) {};
\begin{scope}[yscale=1]
        \node[vertex, fill=lipicsLineGray, label=-90:$C$] (c) at (4,-14) {};
        \node[vertex, fill=lipicsLineGray, label=90:$c^i$] (ci) at (4,-12) {};
        \node[vertex,label=0:$a^i_2$] (d1) at (1,-10) {};
        \node[vertex,label=90:$a^i_1$] (e1) at (4,-10) {};
        \node[vertex,label=0:$a^i_0$] (f1) at (7,-10) {};
        \node[vertex,label=270:$b^i_2$] (d2) at (1,-8) {};
        \node[vertex,label=180:$b^i_1$] (e2) at (4,-8) {};
        \node[vertex,label=90:$b^i_0$] (f2) at (7,-8) {};
\end{scope}
        \draw[lipicsYellow, -stealth', very thick] (x1) -- (x2);
        \draw[lipicsYellow, -stealth', very thick] (x2) -- (x3);
        \draw[lipicsYellow, -stealth', very thick] (x3) -- (x4);
        \draw[lipicsYellow, -stealth', very thick] (y1) -- (y2);
        \draw[lipicsYellow, -stealth', very thick] (y2) -- (y3);
        \draw[lipicsYellow, -stealth', very thick] (y3) -- (y4);
        \draw[lipicsYellow, -stealth', very thick] (z1) -- (z2);
        \draw[lipicsYellow, -stealth', very thick] (z2) -- (z3);
        \draw[lipicsYellow, -stealth', very thick] (z3) -- (z4);

        \draw[lipicsYellow, -stealth', very thick] (c) -- (ci);
        \draw[lipicsYellow, -stealth', very thick] (ci) -- (d1);
        \draw[lipicsYellow, -stealth', very thick] (ci) -- (e1);
        \draw[lipicsYellow, -stealth', very thick] (ci) -- (f1);
        \draw[lipicsYellow, -stealth', very thick] (d1) -- (d2);
        \draw[lipicsYellow, -stealth', very thick] (e1) -- (e2);
        \draw[lipicsYellow, -stealth', very thick] (f1) -- (f2);

        \draw[black,  bend left] (x2) to (d2);
        \draw[black,  bend left= 45] (y2) to (e2);
        \draw[black,  bend right] (z2) to (f2);
        \draw[black] (d1) to (y4);
        \draw[black] (d2) to (y3);
        \draw[black] (e1) to (z4);
        \draw[black] (e2) to (z3);
        \draw[black,  bend left=10] (f1) to (x4);
        \draw[black, bend left=10] (f2) to (x3);

        \draw[black,  bend angle=45, bend left] (c) to (x3);
        \draw[black,  bend left] (c) to (y3);
        \draw[black,  bend angle=45, bend right] (c) to (z3);
        \draw[black,  bend left=10] (ci) to (x3);
        \draw[black,  bend angle=15, bend left] (ci) to (y3);
        \draw[black,  bend right=12] (ci) to (z3);
    \end{tikzpicture}
    \caption{This figure illustrates the clause gadget. The tree edges are colored yellow and the gray vertices mark the vertices that can be found in the variable gadget. Note that we have drawn this figure horizontally to make the embedding cleaner. The directions of the tree edges denote the direction from the root of the tree.}
    \label{fig:clause}
\end{figure}

The first step in our reduction is to check whether we can use a search ordering that achieves $T(\I)$ to construct an assignment of the 3-SAT instance $\I$. For each variable we will choose the assignment that corresponds to the literal vertex chosen \emph{second}, i.e., if $x_j(1)$ is chosen first we assign $X_j$ the value~0 and if $x_j(0)$ is chosen first we assign $X_j$ the value~1. The following lemma shows that the children of the literal vertex that is chosen first are visited before the clause vertices. 

\begin{lemma}\label{lem:hub}
    If $T(\I)$ is an \cl-tree of some GS ordering $\sigma$ of $G(\I)$ starting in $r$, then it holds for every variable $x_j$ that all the children of $x_j(1)$ or all the children of $x_j(0)$ are to the left of vertex $C$ in $\sigma$. In particular, if $x_j(p) \prec_\sigma x_j(q)$, then the children of $x_j(p)$ are all to the left of $x_j(q)$.
\end{lemma}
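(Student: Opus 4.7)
The plan is to extract the desired ordering constraints from U-bend configurations (\Cref{lemma:edge}) applied to two specific families of non-tree edges. The first family, the edges $x_j(p)c^i$ between literal vertices and clause vertices, will force both literal vertices to appear before $C$; the second, the edges $x_j^i x_j^k$ that connect children of opposite-polarity literal vertices, will deliver the ``in particular'' interleaving.

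First, fix any clause index $i$ and apply \Cref{lemma:edge} to the non-tree edge $x_j(p)c^i$. The tree-parents of its endpoints are $r$ and $C$ respectively, so the lemma yields either $r \prec_\sigma x_j(p) \prec_\sigma C \prec_\sigma c^i$ or $C \prec_\sigma c^i \prec_\sigma r \prec_\sigma x_j(p)$. Since $\sigma$ starts at $r$, only the first possibility survives. Applying this for $p \in \{0,1\}$ shows that both literal vertices of every variable precede $C$ in $\sigma$; combined with the second claim below, this will also imply the first claim of the lemma.

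To prove the ``in particular'' statement, assume $x_j(p) \prec_\sigma x_j(q)$ and pick an arbitrary child $x_j^i$ of $x_j(p)$, i.e.\ $X_j$ appears with polarity $p$ in clause $C^i$. Suppose first that $X_j$ also appears with polarity $q$ in some clause $C^k$, so that $x_j^k$ is a child of $x_j(q)$ and $x_j^i x_j^k$ is a non-tree edge of $G(\I)$. Applying \Cref{lemma:edge} to this edge, whose parents are $x_j(p)$ and $x_j(q)$, gives either $x_j(p) \prec_\sigma x_j^i \prec_\sigma x_j(q) \prec_\sigma x_j^k$ or $x_j(q) \prec_\sigma x_j^k \prec_\sigma x_j(p) \prec_\sigma x_j^i$. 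The second alternative contradicts the assumption $x_j(p) \prec_\sigma x_j(q)$, so the first must hold; in particular $x_j^i \prec_\sigma x_j(q)$. Ranging over all children $x_j^i$ of $x_j(p)$ and combining with $x_j(q) \prec_\sigma C$ from step one yields both conclusions of the lemma for whichever of $x_j(0), x_j(1)$ appears first in $\sigma$.

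The main, and rather minor, obstacle is the case where $X_j$ occurs with only one polarity, so that $x_j(q)$ has no children and the opposite-polarity edges $x_j^i x_j^k$ that drive the preceding U-bend argument do not exist. In that situation the first sentence of the lemma holds vacuously on the childless side (the empty set of children lies to the left of $C$), and for the ``in particular'' part we may invoke the standard preprocessing assumption that each variable of the $3$-SAT instance occurs with both polarities. The substantive step is therefore the clean U-bend argument; the key insight is simply to identify $x_j^i x_j^k$ as the right non-tree edge on which to invoke \Cref{lemma:edge}.
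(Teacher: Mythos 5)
Your proof is correct and follows essentially the same route as the paper: U-bends (\Cref{lemma:edge}) on the literal--clause edges $x_j(p)c^i$ to place both literal vertices before $C$, and U-bends on the cross edges $x_j^i x_j^k$ between children of opposite literals to force the earlier literal's children before the later literal. The only differences are cosmetic: the paper first invokes the hook lemma to get $x_j(p)\prec_\sigma c^i$ before applying the U-bend (you instead use that $r$ is first), and you explicitly flag the single-polarity edge case, which the paper leaves to the standard 3-SAT normalization.
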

\begin{proof}
    The sets of vertices $\{x_j(1),c^i,r\}$ and $\{x_j(0),c^i,r\}$ form hook configurations with eye $c^i$ and point $x_j(1)$ or $x_j(0)$, respectively. By \cref{lemma:hook}, it holds that $x_j(1) \prec_\sigma c^i$ and $x_j(0) \prec_\sigma c^i$. Furthermore, with \cref{lemma:edge} we see that $x_j(1) \prec_\sigma C$ and $x_j(0) \prec_\sigma C$, as for example $r,x_j(1),c^i$ and $C$ form a U-bend. W.l.o.g., we may assume that $x_j(1) \prec_\sigma x_j(0)$. Let $u$ be an arbitrary child of $x_j(1)$ and let $v$ be an arbitrary child of $x_j(0)$. By construction of $G(\I)$ and $T(\I)$, we know that $uv \in E(G(\I)) \setminus E(T(\I))$. Then, due to \cref{lemma:edge} and our assumption, we see that $u \prec_\sigma x_j(0) \prec_\sigma c_i$. Therefore, the children of $x_j(1)$ are to the left of $x_j(0)$ and thus also to the left of the clause hub vertex $C$, proving the statement of the lemma.
\end{proof}
The next lemma motivates how the choice of the variable assignment is used to check whether a given clause is fulfilled.

\begin{lemma}\label{lem:clause}
    If $T(\I)$ is an \cl-tree of $G(\I)$ for the search ordering $\sigma$, then for each clause $C^i =\{X_{j_0}(p_0), X_{j_1}(p_1), X_{j_2}(p_2)\}$ it holds that $c^i \prec_\sigma x_{j_q}^i$ for some $q \in \{0,1,2\}$.
\end{lemma}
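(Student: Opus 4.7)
The plan is to argue by contradiction: assume $x^i_{j_q} \prec_\sigma c^i$ holds for \emph{every} $q \in \{0,1,2\}$ and derive a cyclic chain of inequalities among the three vertices $d^i_{j_0}, d^i_{j_1}, d^i_{j_2}$. The engine behind the contradiction is the circular wiring $b^i_p d^i_{j_{(p+1)\bmod 3}}$ in the clause gadget: each of these three non-tree edges, together with \Cref{lemma:edge}, will lock $d^i_{j_{(p+1)\bmod 3}}$ to the left of $a^i_p$, and a parallel chain of reasoning on the other non-tree edges incident to the clause gadget will lock $b^i_p$ to the left of $d^i_{j_p}$. Concatenating these two facts around the 3-cycle of indices produces the contradiction $d^i_{j_0} \prec_\sigma d^i_{j_0}$.

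More concretely, I would carry out the following steps. First, apply \Cref{lemma:edge} to the edge $b^i_p d^i_{j_{(p+1)\bmod 3}}$ for each $p$: the parents in $T$ are $a^i_p$ and $x^i_{j_{(p+1)\bmod 3}}$, so one of the two U-bend orderings must hold. The branch that puts $x^i_{j_{(p+1)\bmod 3}}$ to the right of $b^i_p$ (hence to the right of $c^i$, since $c^i \prec a^i_p \prec b^i_p$) is incompatible with our assumption; therefore the other branch holds and yields $d^i_{j_{(p+1)\bmod 3}} \prec_\sigma a^i_p$. Second, apply \Cref{lemma:edge} in succession to $Ce^i_{j_p}$, then $c^i e^i_{j_p}$, then $a^i_p f^i_{j_p}$, then $b^i_p e^i_{j_p}$. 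Using that $r$ is the first vertex, the first edge gives $C \prec_\sigma d^i_{j_p} \prec_\sigma e^i_{j_p}$; this rules out one branch of the U-bend for $c^i e^i_{j_p}$, giving $c^i \prec_\sigma d^i_{j_p}$; this in turn rules out the bad branch for $a^i_p f^i_{j_p}$, giving $a^i_p \prec_\sigma e^i_{j_p}$; and finally this rules out the bad branch for $b^i_p e^i_{j_p}$, yielding $b^i_p \prec_\sigma d^i_{j_p}$.

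Combining the two conclusions gives $d^i_{j_{(p+1)\bmod 3}} \prec_\sigma a^i_p \prec_\sigma b^i_p \prec_\sigma d^i_{j_p}$ for every $p \in \{0,1,2\}$. Reading this for $p=0,1,2$ we get $d^i_{j_1}\prec d^i_{j_0}$, $d^i_{j_2}\prec d^i_{j_1}$, and $d^i_{j_0}\prec d^i_{j_2}$, whose composition is the impossible chain $d^i_{j_0} \prec_\sigma d^i_{j_0}$. This contradicts the assumption, so at least one $q \in \{0,1,2\}$ must satisfy $c^i \prec_\sigma x^i_{j_q}$.

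The main obstacle I expect is conceptual rather than computational: one must recognize which non-tree edges to feed into \Cref{lemma:edge} and in which order, so that each application deterministically eliminates one U-bend branch using a constraint derived from the previous step. The choice of the cyclic wiring $b^i_p d^i_{j_{(p+1)\bmod 3}}$ in the gadget is precisely what generates the closed 3-cycle that collapses; the other four families of non-tree edges play the purely auxiliary role of forcing $b^i_p \prec_\sigma d^i_{j_p}$. Once this bookkeeping is set up correctly, all individual case analyses are routine applications of \Cref{lemma:edge}.
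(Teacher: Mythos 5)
Your proof is correct, and it uses the same machinery as the paper's proof (repeated applications of \Cref{lemma:edge} to the clause gadget's non-tree edges under the assumption that all three $x^i_{j_q}$ precede $c^i$), but it closes the contradiction along a genuinely different route. The paper first establishes $c^i \prec_\sigma e^i_{j_q}$ for all $q$ (via the hook $\{C,e^i_{j_q},r\}$ and the U-bend on $c^ie^i_{j_q}$), uses the cyclic edges $b^i_pd^i_{j_{(p+1)\bmod 3}}$ only to conclude that \emph{some single} $d^i_{j_q}$ lies to the left of all three $a^i_{q'}$, and then derives a direct violation of \Cref{lemma:edge} at that one index: the U-bend of $a^i_qf^i_{j_q}$ has both parents $c^i$ and $e^i_{j_q}$ to the left of both children. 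You instead run the full chain of deductions ($Ce^i_{j_p}$, $c^ie^i_{j_p}$, $a^i_pf^i_{j_p}$, $b^i_pe^i_{j_p}$) symmetrically at every index $p$ to get $b^i_p \prec_\sigma d^i_{j_p}$, combine it with $d^i_{j_{(p+1)\bmod 3}} \prec_\sigma a^i_p$ from the cyclic wiring, and obtain the impossible cycle $d^i_{j_0} \prec_\sigma d^i_{j_2} \prec_\sigma d^i_{j_1} \prec_\sigma d^i_{j_0}$. Your version is more symmetric and makes the purpose of the modulo-$3$ wiring (creating a directed cycle that must collapse) more transparent; the paper's version is slightly more economical, needing the deeper deductions at only one branch and using the edge $a^i_qf^i_{j_q}$ as the terminal contradiction rather than as an intermediate step. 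Both are complete and correct.
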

\begin{proof}
    Assume for contradiction that $c^i$ is to the right of $x_{j_q}^i$ for all $q \in \{0,1,2\}$. Using the hook and U-bend rules from~\cref{lemma:hook,lemma:edge}, we can show that $T(\I)$ is not an \cl-tree for $\sigma$. For all $q \in \{0,1,2\}$, the set $\{C, e^i_{j_q}, r\}$ forms a hook with point $C$ and eye $e^i_{j_q}$ and, thus, $C \prec_\sigma e^i_{j_q}$. Now the U-bend induced by the edge $e^i_{j_q}c^i$ implies that $c^i \prec_\sigma e^i_{j_q}$. For any 
    $q \in \{0,1,2\}$, the vertex $d^i_{j_q}$ is adjacent to some vertex $b^i_{q'}$ and the corresponding edge induces a U-bend. Since all three vertices $x_{j_q}^i$ are to the left of all three vertices $a^i_{q'}$, these U-bends imply that at least one vertex $d^i_{j_q}$ is to the left of all $a^i_{q'}$. Fix that vertex $d^i_{j_q}$. Now the edge $b^i_qe^i_{j_q}$ implies that $e^i_{j_q} \prec_\sigma a^i_q$. Summarizing, $c^i \prec_\sigma e^i_{j_q} \prec_\sigma a^i_q$. However, this contradicts \cref{lemma:edge} as the edge $f^i_{j_q}a^i_q$ induces a U-bend where both parents ($c^i$ and $e^i_{j_q}$) are to the left of both children ($a^i_q$ and $f^i_{j_q}$). This concludes the proof.    
\end{proof}
This lemma shows that for each clause there is one literal for which the corresponding child belonging to that clause has to be chosen after the clause vertex. This will be the literal that satisfies the clause in a fulfilling assignment. On the other hand, if there is a literal whose child is chosen after the clause vertex, then the corresponding clause gadget can be correctly traversed.
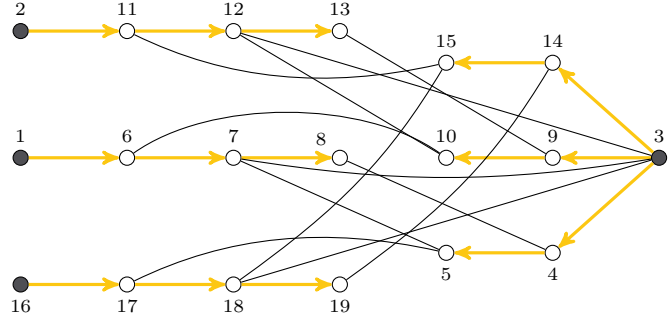
\begin{figure}
    \centering
    \begin{tikzpicture}[vertex/.style={inner sep=2pt,draw,circle},yscale = 0.6,scale =0.7,rotate=90]
    \scriptsize
        \node[vertex, fill=lipicsLineGray, label=270:$1$] (x1) at (0,0) {};
        \node[vertex, fill=lipicsLineGray, label=90:$16$] (y1) at (4,0) {};
        \node[vertex, fill=lipicsLineGray, label=90:$2$] (z1) at (8,0) {};
        \node[vertex, label=270:$11$] (x2) at (0,-2) {};
        \node[vertex, label=90:$17$] (y2) at (4,-2) {};
        \node[vertex, label=90:$6$] (z2) at (8,-2) {};
        \node[vertex, label=270:$12$] (x3) at (0,-4) {};
        \node[vertex, label=90:$18$] (y3) at (4,-4) {};
        \node[vertex, label=90:$7$] (z3) at (8,-4) {};
        \node[vertex, label=270:$13$] (x4) at (0,-6) {};
        \node[vertex, label=90:$19$] (y4) at (4,-6) {};
        \node[vertex, label=90:$8$] (z4) at (8,-6) {};
\begin{scope}[yscale=1]
        \node[vertex, fill=lipicsLineGray, label=90:$3$] (ci) at (4,-12) {};
        \node[vertex,label=270:$14$] (d1) at (1,-10) {};
        \node[vertex,label=90:$4$] (e1) at (4,-10) {};
        \node[vertex,label=90:$9$] (f1) at (7,-10) {};
        \node[vertex,label=270:$15$] (d2) at (1,-8) {};
        \node[vertex,label=90:$5$] (e2) at (4,-8) {};
        \node[vertex,label=90:$10$] (f2) at (7,-8) {};
\end{scope}
        \draw[lipicsYellow, -stealth', very thick] (x1) -- (x2);
        \draw[lipicsYellow, -stealth', very thick] (x2) -- (x3);
        \draw[lipicsYellow, -stealth', very thick] (x3) -- (x4);
        \draw[lipicsYellow, -stealth', very thick] (y1) -- (y2);
        \draw[lipicsYellow, -stealth', very thick] (y2) -- (y3);
        \draw[lipicsYellow, -stealth', very thick] (y3) -- (y4);
        \draw[lipicsYellow, -stealth', very thick] (z1) -- (z2);
        \draw[lipicsYellow, -stealth', very thick] (z2) -- (z3);
        \draw[lipicsYellow, -stealth', very thick] (z3) -- (z4);

        \draw[lipicsYellow, -stealth', very thick] (ci) -- (d1);
        \draw[lipicsYellow, -stealth', very thick] (ci) -- (e1);
        \draw[lipicsYellow, -stealth', very thick] (ci) -- (f1);
        \draw[lipicsYellow, -stealth', very thick] (d1) -- (d2);
        \draw[lipicsYellow, -stealth', very thick] (e1) -- (e2);
        \draw[lipicsYellow, -stealth', very thick] (f1) -- (f2);

        \draw[black,  bend left] (x2) to (d2);
        \draw[black,  bend left= 45] (y2) to (e2);
        \draw[black,  bend right] (z2) to (f2);
        \draw[black] (d1) to (y4);
        \draw[black] (d2) to (y3);
        \draw[black] (e1) to (z4);
        \draw[black] (e2) to (z3);
        \draw[black,  bend left=10] (f1) to (x4);
        \draw[black,  bend left=10] (f2) to (x3);

        \draw[black,  bend left=0] (ci) to (x3);
        \draw[black,  bend angle=15, bend left] (ci) to (y3);
        \draw[black,  bend right=0] (ci) to (z3);

    \end{tikzpicture}
    \caption{Example of an ordering of the technical vertices for a clause where two literals (1 and~2) were chosen to be false. Compare with~\cref{fig:clause}.}
    \label{fig:technical}
\end{figure}
Combining these results proves the main result of this section.
\begin{theorem}\label{thm:main}
    The rooted \cl-tree recognition problem of Generic Search is \NP-complete for rooted spanning trees of height $5$.
\end{theorem}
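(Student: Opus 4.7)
My plan is to establish both \NP{} membership and \NP{}-hardness, the latter via the reduction $\I \mapsto (G(\I), T(\I))$ already constructed in this section. Membership in \NP{} is clear: a GS ordering $\sigma$ serves as a polynomial-sized witness, and verifying that it is connected, starts at $r$, and induces $T(\I)$ as its $\cl$-tree can be done in linear time. For the height bound I would simply inspect $T(\I)$: the longest root-to-leaf paths have the form $r - x_j(p) - x^i_j - d^i_j - e^i_j - f^i_j$, so the height is exactly $5$. Clearly, the construction of $(G(\I),T(\I))$ is polynomial in the size of $\I$.

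For the direction ``yes-instance implies satisfiable'', I would use \cref{lem:hub} and \cref{lem:clause} to extract a satisfying assignment from any GS ordering $\sigma$ whose $\cl$-tree is $T(\I)$. For each variable $X_j$, \cref{lem:hub} guarantees that exactly one of $x_j(0), x_j(1)$, together with all of its children, precedes $C$ in $\sigma$; let $x_j(p_j)$ denote that ``false-literal'' vertex and set $X_j := 1 - p_j$. To verify that every clause $C^i = \{X_{j_0}(p_0), X_{j_1}(p_1), X_{j_2}(p_2)\}$ is satisfied, invoke \cref{lem:clause} to obtain some $q$ with $c^i \prec_\sigma x^i_{j_q}$. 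Since every child of the false-literal vertex of $X_{j_q}$ precedes $C$ and thus $c^i$, the parent $x_{j_q}(p_q)$ of $x^i_{j_q}$ cannot be that false-literal vertex, so $p_q \neq 1 - X_{j_q}$, i.e.\ $X_{j_q} = p_q$ and the clause is satisfied.

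For the converse direction, given a satisfying assignment $\alpha$, I would construct $\sigma$ in four phases: (i)~visit $r$; (ii)~for each variable $X_j$ in some order, visit the false-literal vertex $x_j(1-\alpha(X_j))$ followed by all of its children $x^i_j$; (iii)~visit $C$, then for each clause $C^i$ visit $c^i$ and traverse the clause gadget using a satisfying literal $X_{j_q}(p_q)$ whose child $x^i_{j_q}$ remains unvisited as an ``anchor''; (iv)~finish with the true-literal vertices and the remaining technical vertices. At each step one must verify that the next vertex is adjacent to some already-visited vertex (connectedness) and that its rightmost visited neighbor in $\sigma$ is its parent in $T(\I)$, which amounts to checking \cref{lemma:edge} and \cref{lemma:hook} for every non-tree edge touched.

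The main obstacle is phase (iii): the circular constraints induced by the edges $b^i_p d^i_{j_{(p+1)\bmod 3}}$ inside the clause gadget force a case distinction on the witness index $q$ chosen for each clause. After visiting $c^i$, I would descend into the branch $a^i_q - b^i_q$ corresponding to the satisfying literal, then interleave the $(d,e,f)$-chains of the other two indices in a cyclic schedule that respects the non-tree edges $b^i_p d^i_{j_{(p+1)\bmod 3}}$ and only afterwards visit $a^i_{q'} - b^i_{q'}$ for the remaining indices; the branch of $x^i_{j_q}$ itself is handled in phase (iv) together with the true-literal vertex $x_{j_q}(p_q)$. This is the one place where the explicit ordering has to be pinned down, but once it is, the correctness of each non-tree edge can be verified by a short check against \cref{lemma:edge}, completing the reduction and the theorem.
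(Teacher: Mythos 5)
Your overall architecture matches the paper's: \NP{} membership, the height-$5$ bound via the path $r - x_j(p) - x^i_j - d^i_j - e^i_j - f^i_j$, the forward direction via \cref{lem:hub} and \cref{lem:clause} (which you in fact spell out more carefully than the paper does), and the same phase structure for the converse. The gap is in the one step you yourself single out as the main obstacle: the explicit traversal of the clause gadget in phase~(iii) is infeasible as written. First, you propose to enter the branch $a^i_q - b^i_q$ of the \emph{satisfying} literal immediately after $c^i$. But the non-tree edge $b^i_q d^i_{j_{(q+1)\bmod 3}}$ induces a U-bend whose only admissible resolutions are $a^i_q \prec b^i_q \prec x^i_{j_{(q+1)\bmod 3}} \prec d^i_{j_{(q+1)\bmod 3}}$ or $x^i_{j_{(q+1)\bmod 3}} \prec d^i_{j_{(q+1)\bmod 3}} \prec a^i_q \prec b^i_q$; when the literal with local index $(q+1)\bmod 3$ is false, its vertex $x^i_{j_{(q+1)\bmod 3}}$ already precedes $c^i$, so the first resolution is dead and $d^i_{j_{(q+1)\bmod 3}}$ must precede $a^i_q$ --- that is, branch $q$ must be entered \emph{last}, not first. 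Second, you defer all remaining branches $a^i_{q'} - b^i_{q'}$ until after the $(d,e,f)$-chains of the false indices; but the edge $a^i_{q'} f^i_{j_{q'}}$ forces $a^i_{q'} \prec e^i_{j_{q'}}$, and then the edge $b^i_{q'} e^i_{j_{q'}}$ forces $b^i_{q'} \prec d^i_{j_{q'}}$, so each chain can only be opened \emph{after} its own branch, not before.

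The workable schedule --- the one the paper gives, cf.\ \cref{fig:technical} --- interleaves branches and chains in reverse cyclic order starting from branch $(q-1)\bmod 3$: that branch is unlocked initially precisely because its $d$-partner is the true literal $j_q$, whose vertex $x^i_{j_q}$ may still come later; visiting $a^i_{(q-1)\bmod 3}, b^i_{(q-1)\bmod 3}$ then permits the chain $d^i_{j_{(q-1)\bmod 3}}, e^i_{j_{(q-1)\bmod 3}}, f^i_{j_{(q-1)\bmod 3}}$ when that literal is false, which in turn unlocks branch $(q-2)\bmod 3$, and so on, ending with branch $q$; the chain of $j_q$ is only traversed in your phase~(iv) after $x^i_{j_q}$. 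So the key idea --- that the true literal is where the cycle of dependencies can be broken --- is present in your sketch, but the concrete ordering you wrote down points the cycle the wrong way and would fail exactly the verification against \cref{lemma:edge} that you propose to run.
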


\begin{proof}
    Let $\sigma$ be a GS ordering with \cl-tree $T(\I)$. We define an assignment $\A$ by setting any variable to false if and only if the positive literal appears before its negative literal in $\sigma$. We claim that this is a fulfilling assignment. Clearly we only need to show that for each clause at least one literal was chosen to be true. In particular, by~\cref{lem:clause} we see that for each $C^i =\{X_{j_0}(p_0), X_{j_1}(p_1), X_{j_2}(p_2)\}$ at least one vertex $x^i_{j_q}$ is to the right of $c^i$ and, thus, to the right of the vertex $C$. Due to \cref{lem:hub}, the parent of $x^i_{j_q}$ is to the right of the vertex of variable. This implies that at least one literal contained in $C^i$ is fulfilled in $\A$.

    Let $\cal{A}$ be a fulfilling assignment of $\I$. We now show that in this case we can construct a GS ordering of $G(\I)$ which has the \cl-tree $T(\I)$.
    The broad idea is to choose the literals that are false for $\A$ first followed by their children. Then we choose the literals that are true, followed by the clause hub vertex $C$ and then the clause vertices. Finally, we need to visit the technical vertices in the correct order, followed by the descendants of the true literals. In the following, we define several suborders that need to be combined into the final linear ordering $\sigma$.

    As explained above, we begin the search ordering $\sigma$ by visiting the root $r$  and then all literals that are set to false by $\A$ in arbitrary order. Next we visit all children of these vertices. In the next phase, we visit all remaining literal vertices in an arbitrary order. At this point, we can visit the clause hub vertex $C$ (see~\cref{lem:hub}) followed by the clause vertices in arbitrary order. Let $C^i =\{X_{j_0}(p_0), X_{j_1}(p_1), X_{j_2}(p_2)\}$ be some clause. If all literals of $C^i$ were chosen to be true (i.e., all the vertices $x^i_{j_0}$, $x^i_{j_1}$, and $x^i_{j_2}$ are to the right of $c^i$), then we can visit the technical vertices of $c^i$ in the order $a^i_0, a^i_1, a^i_2,b^i_0, b^i_1, b^i_2$ (or any other order that conforms with GS) followed by the literal vertices and their technical vertices following the order that is implied by the tree edges. If exactly one of the literals, say w.l.o.g. $X_{j_0}(p_0)$, is chosen to be false by $\A$, then we use the order $x^i_{j_0},c^i, a^i_0, b^i_0, d^i_{j_0}, e^i_{j_0},f^i_{j_0}$. Then we visit the remaining technical vertices of $c^i$ followed by $x^i_{j_1}, x^i_{j_2}$ and their technical vertices.
    
    If exactly two of the literals, say w.l.o.g. $X_{j_0}(p_0)$ and $X_{j_1}(p_1)$, are chosen to be false by $\A$, then we use the order $x^i_{j_0},x^i_{j_1},c^i, a^i_1, b^i_1, d^i_1, e^i_1,f^i_1, a^i_0,b^i_0,d^i_0, e^i_0,f^i_0$. Then we visit the remaining technical vertices of $c^i$ followed by $x^i_{j_2}$ and its technical vertices (see \cref{fig:technical} for an illustration).

    Because $\A$  is a fulfilling assignment, we know that each clause has at most two literals that are set to false. Therefore, we can combine all of these orderings to a comprehensive GS ordering $\sigma$ and confirm that $T(\I)$ is in fact an \cl-tree of $\sigma$. 
\end{proof}
Using~{\cite[Theorems 21 and 23]{scheffler2022recognition}},~we can strengthen \cref{thm:main} to the case of split graphs and give a similar result for Breadth First Search. 
\begin{corollary}
    The rooted \cl-tree recognition problems of Generic Search and Breadth First Search are \NP-complete even if the input is restricted to split graphs and to rooted spanning trees of height $12$.
\end{corollary}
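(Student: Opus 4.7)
The plan is to obtain the corollary by plugging the instance produced in the proof of \Cref{thm:main} into two reductions already established by Scheffler in~\cite{scheffler2022recognition}. \Cref{thm:main} gives \NP-hardness of the rooted \cl-tree recognition problem of Generic Search on the explicit instances $(G(\I),T(\I))$, whose spanning trees have height exactly $5$. The two ingredients we still need are: (a) \cite[Theorem~21]{scheffler2022recognition}, which transforms any instance of the rooted \cl-tree recognition problem of GS into an equivalent instance on a split graph, and (b) \cite[Theorem~23]{scheffler2022recognition}, which transfers hardness on split graphs from GS to BFS.

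First I would apply the reduction of \cite[Theorem~21]{scheffler2022recognition} to $(G(\I),T(\I))$. This reduction augments the graph with a bounded-depth gadget (a clique together with a few attached vertices) so that the resulting graph $G'$ is split, while embedding $T(\I)$ as a subtree of a new spanning tree $T'$ rooted consistently with the original root. The critical quantitative point is that the construction increases the tree height by a fixed additive constant. Checking that constant against Scheffler's gadget yields exactly $7$, so the height of $T'$ is $5+7=12$. The stated equivalence of the two instances then gives the \NP-hardness of the rooted \cl-tree recognition problem of GS on split graphs with spanning trees of height $12$.

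Next I would feed $(G',T')$ into the reduction of \cite[Theorem~23]{scheffler2022recognition}, which shows that, on split graphs, the rooted \cl-tree recognition problem of BFS is at least as hard as that of GS. This reduction works within the split-graph class and does not further increase the tree height, so applying it directly transfers \NP-hardness to BFS on split graphs with spanning trees of height~$12$. Membership in \NP{} for both problems is immediate: a candidate vertex ordering serves as a certificate and can be checked in linear time against the GS or BFS rule and against the tree $T'$.

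The main obstacle is purely bookkeeping: one has to verify that the two reductions of~\cite{scheffler2022recognition} compose cleanly with the instances produced by \Cref{thm:main}, that the resulting graph remains a split graph after the second reduction, and above all that the height of the spanning tree is raised by the claimed constant (giving the value $12$). None of this requires new ideas beyond a careful inspection of Scheffler's gadgets.
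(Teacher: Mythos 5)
Your proposal matches the paper's own (very terse) justification: the paper likewise derives the corollary by composing the instance of \Cref{thm:main} with Theorems~21 and~23 of Scheffler's paper, relying on those reductions to produce split graphs and to transfer hardness to BFS, with the height bound of~$12$ coming from the constant-height overhead of those gadgets. Your added bookkeeping about the additive height increase is consistent with what the paper asserts, so the approach is essentially identical.
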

Note that these results are all for the \emph{rooted} \cl-tree recognition problem. Using a small gadget, we can also extend the hardness of the GS \cl-tree recognition to the unrooted problem.

\begin{corollary}
    The \cl-tree recognition problem of Generic Search is \NP-complete.
\end{corollary}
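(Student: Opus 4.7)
Our plan is to reduce from the rooted \cl-tree recognition problem of GS, which is \NP-complete by~\cref{thm:main}. Given a rooted instance $(G, T, r)$, we construct an unrooted instance $(G', T')$ by attaching a constant-size \emph{anchor gadget} at $r$: introduce two new vertices $s_1, s_2$, add tree edges $rs_1$ and $rs_2$ to $T'$, and add the single non-tree edge $s_1 s_2$ to $G'$. Membership in \NP{} is witnessed by the search ordering itself, and the reduction is clearly polynomial, so we will focus on correctness.

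The gadget is designed around a strengthening of~\cref{lemma:edge} that will serve as our key observation: no two \emph{siblings} in an \cl-tree can be joined by a non-tree edge, because applying the lemma to such an edge $xy$ with common parent $p = x' = y'$ yields $p \prec_\sigma x \prec_\sigma p \prec_\sigma y$, contradicting the fact that $p$ occupies a single position in $\sigma$. In $T'$, the leaves $s_1$ and $s_2$ have only $r$ as a neighbor, so regardless of which vertex of $V(G)$ is chosen as the root, they remain children of $r$ and are therefore siblings. Because they are joined by the non-tree edge $s_1 s_2$, this rules out every rooting outside $\{s_1, s_2\}$, and by the obvious symmetry we may assume that any valid GS ordering $\sigma'$ begins with $s_1$.

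The two directions are now straightforward. For the forward direction, given a GS ordering $\sigma = (r, v_2, \ldots, v_n)$ of $G$ with \cl-tree $T$, we take $\sigma' = (s_1, r, v_2, \ldots, v_n, s_2)$. Since $s_1$ and $s_2$ are non-neighbors in $G'$ of every $v_i \in V(G) \setminus \{r\}$, each such $v_i$ inherits its parent from $T$; together with $r$ taking $s_1$ as parent and $s_2$ taking $r$ as parent, this yields exactly $E(T')$. For the backward direction, if $\sigma'$ starts with $s_1$ and its second vertex were $s_2$, then $s_2$ would take $s_1$ as its parent in the \cl-tree, contradicting $T'$; hence $\sigma' = (s_1, r, \ldots)$. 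Deleting $s_1$ and $s_2$ from $\sigma'$ then yields a sequence $\tau$ of $V(G)$ starting at $r$. Prefix-connectedness of $\tau$ in $G$ transfers from that of $\sigma'$ in $G'$ because $s_1$ and $s_2$ attach to $V(G)$ only through $r$, and the same non-adjacency observation shows that every $v \in V(G) \setminus \{r\}$ has the same rightmost preceding neighbor in $\tau$ and in $\sigma'$, so the \cl-tree of $\tau$ is exactly $T$.

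The main obstacle is really the sibling-edge argument in the second paragraph: without the single non-tree edge $s_1 s_2$, the unrooted instance would admit GS orderings starting at vertices of $V(G)$ bearing no relation to the prescribed root $r$, and the reduction would collapse.
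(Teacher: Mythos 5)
Your proof is correct and follows essentially the same approach as the paper: a constant-size gadget attached at $r$ whose non-tree edge between two forced tree-siblings creates conflicting hooks, ruling out every root outside the gadget and making $r$ the de facto root. The paper hangs a triangle $a,b,c$ off $r$ (with non-tree edge $bc$) instead of your two pendant vertices $s_1,s_2$, but the mechanism and the two correctness directions are the same, and your version is if anything spelled out more carefully.
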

\begin{proof}
    Let $G$ be some graph and $T$ be a spanning tree of $G$ with root $r$. We add three vertices $a,b,c$ to $G$ in the following way to form a new graph $G'$: Let $V(G')= V(G) \cup \{a,b,c\}$. Furthermore, $E(G') = E(G)\cup \{ab,ac,bc,ar\}$. Finally, we define a spanning tree $T'$ of $G'$ with $V(T') = V(G')$ and $E(T') = E(T) \cup \{ar,ab,ac\}$.

    Due to the conflicting hooks among $a,b,c$, either $b$ or $c$ must be visited before $a$, if $T'$ is to be an \cl-tree of $G'$. This makes $r$ the de facto root of $T'-\{a,b,c\}$, showing that the rooted \cl-tree recognition problem for $G$ and $T$ is equivalent to the unrooted one for $G'$ and~$T'$.
\end{proof}


\section{The Intermezzo Problem}\label{sec:intermezzo}

Given a rooted spanning tree $T$, the basic property that has to be fulfilled by a vertex ordering $\sigma$ for it to have $T$ as an \cl-tree is the following:
If there is a vertex $z$ with parent $y$ and $z$ has a non-tree edge to vertex $x$, then $x$ is not allowed to be between $y$ and $z$ in the vertex ordering. These constraints are similar to those used in the following problem introduced by Guttmann and Maucher~\cite{guttmann2006variations}.   

\begin{problem}[General Intermezzo Problem]~
    \begin{description}
        \item[Instance:] Finite set $A$, set $C$ of triples of distinct elements of $A$
        \item[Question:] Is there an ordering of $A$ such that for all $(x,y,z) \in C$ it holds that $x \prec_\sigma y \prec_\sigma z$ or $y \prec_\sigma z \prec_\sigma x$?
    \end{description}
    \label{prob:ordering1}
\end{problem}

\noindent We call an ordering that fulfills the constraints of $C$ an \emph{intermezzo ordering}. Note that Guttmann and Maucher do not give a name for the general problem as they introduce it as one case of a large family of constrained ordering problems. We derived the name from the more restricted \emph{Intermezzo problem}. This problem additionally forces the triples in $C$ to be pairwise disjoint.

\begin{problem}[Intermezzo Problem~\cite{guttmann2006variations}]~
    \begin{description}
        \item[Instance:] Finite set $A$, set $B$ of pairs of $A$, set $C$ of pairwise disjoint triples of distinct elements of $A$.
        \item[Question:] Is there an ordering of $A$ such that for all $(x,y) \in B$ it holds that $x \prec_\sigma y$ and for all $(x,y,z) \in C$ it holds that $x \prec_\sigma y \prec_\sigma z$ or $y \prec_\sigma z \prec_\sigma x$?
    \end{description}
    \label{prob:ordering2}
\end{problem}

\noindent Besides the tuples in $B$, in both problems the second and the third entry of the triples in $C$ imply simple order constraints on the elements of $A$.  Therefore, we can define the relations $\pi(B,C)$ and $\pi(C)$, respectively, as the reflexive and transitive closure of the relation $R \subseteq A \times A$ where $(y,z) \in R$ if and only if $(y,z) \in B$ or there is some tuple $(x,y,z) \in C$. If $(A,C)$ is a positive instance of the General Intermezzo problem, then $\pi(C)$ must form a partial order and every intermezzo ordering of $(A,C)$ forms a linear extension of $\pi(C)$. The same properties hold for positive instances $(A,B,C)$ of the Intermezzo problem and the partial order $\pi(B,C)$. Thus, we also can interpret the (General) Intermezzo problem as a special kind of linear extension problems with additional non-betweenness constraints. This motivates the consideration of restricted problems where the partial order has to fulfill certain properties.

\subsection{The Intermezzo Problem for CS-trees}

Following the terminology of Trotter~\cite{trotter1992combinatorics}, we call a partial order a \emph{cs-tree} (short for \emph{computer science tree}) if its Hasse diagram forms a tree rooted in the unique minimal element. Using the terminology in \cite{scheffler2023graph}, we call a leaf of a rooted tree a \emph{branch leaf} if it is not equal to the root of the tree. Recall that the height of a cs-tree and the height of the tree that is formed by its Hasse diagram differ by one.
 
\begin{lemma}\label{lemma:tree-intermezzo}~
\begin{enumerate}
    \item The rooted \cl-tree recognition problem of Generic Search for rooted spanning trees of fixed height~$h \geq 2$ is polynomial-time reducible to the General Intermezzo problem for instances $(A,C)$ where $\pi(C)$ is a cs-tree of height $h+1$.
    \item The General Intermezzo problem for instances $(A,C)$ where $\pi(C)$ is a cs-tree of fixed height $h \geq 2$ is polynomial-time reducible to the rooted \cl-tree recognition problem of Generic Search for rooted spanning trees of height~$2h - 1$.
    \item The rooted \cl-tree recognition problem of Generic Search for rooted spanning trees having $k$ branch leaves is polynomial-time equivalent to the General Intermezzo problem for instances $(A,C)$ where $\pi(C)$ is a cs-tree of width $k$.
\end{enumerate}
\end{lemma}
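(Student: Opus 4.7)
My plan is to give polynomial-time reductions in both directions for claim~1; claim~2 follows from the very same constructions, observing that the width of a cs-tree equals the number of leaves of its Hasse diagram, which in turn equals the number of branch leaves of the spanning tree.

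\emph{From \cl-tree recognition to General Intermezzo.} Given a rooted spanning tree $T$ of $G$ with root $r$ and height $h \geq 2$, I set $A := V(G)$ and build $C$ as follows. For every non-tree edge $xz \in E(G) \setminus E(T)$ with $y$ the parent of $z$ in $T$, I add the \emph{natural} triple $(x, y, z)$; this is exactly the rightmost-left-neighbor constraint at $z$. For every tree edge $(y, z)$ not already witnessed by a natural triple, I add an \emph{auxiliary} triple that forces $y \prec z$ in $\pi(C)$: $(r, y, z)$ when $y \neq r$, using that $r$ precedes $y$ in every \cl-tree ordering and therefore cannot lie between $y$ and $z$; and $(z', r, z)$ when $y = r$ and $z$ has a tree-descendant $z'$, using that descendants always follow their ancestors. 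The only case left is a pendant leaf $z$ of $r$ with no non-tree edge; I handle it by attaching a fresh child $v_z$ to $z$ in $T$ together with a non-tree edge $rv_z$, which is legal because $h \geq 2$. By construction $\pi(C)$ coincides with the ancestor relation of the (possibly augmented) tree and is therefore a cs-tree of height $h + 1$.

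\emph{From General Intermezzo to \cl-tree recognition.} Let $(A, C)$ have $\pi(C)$ a cs-tree of height $h + 1$ rooted at $r$. I first normalize $C$. A triple $(x, y, z)$ with $x$ strictly on the $y$-to-$z$ path in the Hasse diagram is unsatisfiable, so in that case I return a trivial NO instance. A triple $(x, y, z)$ whose $y$ is a non-parent ancestor of $z$ is replaced by the path triples $(x, w_0, w_1), \ldots, (x, w_{k-1}, w_k)$ along the $y$-to-$z$ path $y = w_0, \ldots, w_k = z$; a straightforward chain argument shows this leaves both the satisfiability and the partial order $\pi(C)$ unchanged. After normalization, every triple $(x, y, z)$ has $y$ as the Hasse-diagram parent of $z$. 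I then set $V(G) := A$, let $T$ be the Hasse diagram rooted at $r$, and for every remaining triple $(x, y, z)$ add the non-tree edge $xz$ to $G$ whenever it is not already a tree edge; the resulting $T$ is a rooted spanning tree of $G$ of height $h$.

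\emph{Equivalence and main obstacle.} In both directions, the cs-tree structure of $\pi(C)$ ensures that every intermezzo ordering starts at $r$ and extends the ancestor relation of $T$, so the induced vertex ordering is connected and the non-betweenness constraints match the \cl-tree constraints vertex by vertex (the case in which the added edge $xz$ would coincide with a tree edge only arises when $x$ is a descendant of $z$, in which case the intermezzo constraint is already trivially satisfied). The principal technical difficulty is the normalization in the second direction: a triple whose second coordinate is a non-parent ancestor of its third coordinate has no immediate \cl-tree counterpart, and the path-decomposition trick is what reduces the global non-betweenness constraint to a conjunction of local parent-child constraints. A secondary subtlety is the pendant-of-$r$ corner case in the first direction, which is precisely why the statement requires $h \geq 2$.
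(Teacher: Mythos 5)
Your first reduction is correct and is a pleasant variant of the paper's: the paper introduces a dummy element $s$ and encodes each tree edge $uv$ by a triple $(s,u,v)$ (which is why it needs two slightly different placements of $s$ for the height and the width statements), whereas you recycle $r$ and tree-descendants as harmless first coordinates, so one construction serves both claims. Do make sure you add the natural triple for \emph{both} endpoints of every non-tree edge $xz$, i.e., also $(z,y',x)$ with $y'$ the parent of $x$: an undirected edge constrains the \cl-tree parent of whichever endpoint comes later, and this symmetry is precisely what goes wrong in your other direction.

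The reduction from General Intermezzo to \cl-tree recognition has a genuine gap at the sentence ``the non-betweenness constraints match the \cl-tree constraints vertex by vertex.'' The non-tree edge $xz$ added for a triple $(x,y,z)$ encodes the intended constraint ``$x$ is not strictly between $y=\mathrm{parent}(z)$ and $z$,'' but, being undirected, it additionally forces ``$z$ is not strictly between $\mathrm{parent}(x)$ and $x$,'' and this second constraint has no counterpart in $C$. Concretely, let $A=\{\rho,p,x,y\}$ and $C=\{(x,p,y),\,(\rho,p,x),\,(x,\rho,p)\}$. Then $\pi(C)$ is the cs-tree with Hasse edges $\rho\to p$, $p\to x$, $p\to y$, every triple is already in normalized form, and $(\rho,p,y,x)$ is an intermezzo ordering, so this is a YES-instance. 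Your construction outputs this tree as $T$ together with the non-tree edges $xy$ and $\rho x$ (the edge $xp$ from the third triple is a tree edge and is skipped). But no ordering has \cl-tree $T$: since $x$ and $y$ are siblings with common parent $p$ and $xy\in E(G)\setminus E(T)$, the \cl-tree condition at $y$ forces $y\prec_\sigma x$ while the condition at $x$ forces $x\prec_\sigma y$ --- this is exactly the impossible U-bend of \cref{lemma:edge} with $x'=y'=p$. So a satisfiable instance is mapped to an unsatisfiable one. Your normalization step itself is sound; the defect is purely the symmetric effect of the added edges, and it is not cosmetic: the paper's own proof of this direction adds the same edges $xw_i$ and argues only about the parent of $w_i$, never about the parent of $x$, so you cannot repair your argument simply by appealing to it. A correct proof must neutralize the unintended constraint at $x$, for instance by routing the $w_i$-edges through additional gadget vertices that cannot interfere with $x$'s own parent.
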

\begin{proof}
    First, we reduce the \cl-tree recognition problem of GS to the General Intermezzo Problem. Let $G$ be a graph and $T$ be a spanning tree of $G$ rooted in $r$ of height~$h \geq 2$. Let $A = V(G) \cup \{s\}$ where $s \notin V(G)$. Let $C$ be the set containing the following triples: 
    
    \begin{enumerate}[(C1)]
        \item $(r, t, s)$, for some child $t$ of $r$ in $T$,\label{item:c1}
        \item $(s, u, v)$, for any vertex $v \in V(G) \setminus \{r\}$ and its parent $u$ in $T$,\label{item:c2}
        \item $(w, u, v)$, for any vertex $v \in V(G)  \setminus \{r\}$, its parent $u$ in $T$ and any vertex $w$ with $vw \in E(G) \setminus E(T)$.\label{item:c3}
    \end{enumerate} 
    
    It is easy to see that $\pi(C)$ is a cs-tree of height~$h + 1$.  We claim that there is an intermezzo ordering of $A$ fulfilling the constraints given by $C$ if and only if $T$ is a rooted GS \cl-tree of $G$.

    First assume that there is an intermezzo ordering $\sigma$ fulfilling the constraints of $C$. We delete $s$ from $\sigma$ and call the resulting ordering $\sigma'$. The following claim is implied by the constraints given in (C\ref{item:c2}).

    \begin{claim1}\label{claim:1}
        If $u$ is the parent of $v$ in $T$, then $u \prec_{\sigma'} v$.
    \end{claim1}

    This claim implies directly that $\sigma'$ is a GS ordering of $G$ starting in $r$. Now assume for contradiction that the \cl-tree $T'$ of $\sigma'$ is not equal to $T$. Then there is a vertex $v$ whose parent $u'$ in $T'$ is different from its parent $u$ in $T$. By \cref{claim:1}, it holds that $u \prec_{\sigma'} v$. Therefore, it must hold that $u \prec_{\sigma'} u' \prec_{\sigma'} v$. This implies that $u'$ is not a child of $v$ in $T$, due to \cref{claim:1}. Hence, the edge $u'v$ is not part of $T$ but part of $G$. Then, the set $C$ contains the triple $(u',u,v)$ (see (C\ref{item:c3})). This is a contradiction because $\sigma'$ and, thus, $\sigma$ does not fulfill the constraint given by that triple.  

    Now assume that $T$ is the \cl-tree of the GS ordering $\sigma$ starting with $r$. Then let $\sigma'$ be the ordering constructed by appending $s$ to the end of $\sigma$. The ordering fulfills the constraint $(r,t,s)$ given in (C\ref{item:c1}). Furthermore, as parents are to the left of their children in $\sigma$, the ordering $\sigma'$ also fulfills the constraints given by (C\ref{item:c2}). Assume for contradiction that some triple $(w,u,v)$ of (C\ref{item:c3}) is not fulfilled in $\sigma'$, i.e., $u \prec_{\sigma'} w \prec_{\sigma'} v$. Then $u$ is not the parent of $v$ in the \cl-tree of $\sigma$ since $w$ is a neighbor of $v$ and $w$ lies between $u$ and $v$ in $\sigma'$; a contradiction.

    This proves the first statement of the lemma. To prove the same direction for the third statement, we slightly change the set $C$. Instead of the triple $(r, t, s)$ given in (C\ref{item:c1}), we add the triple $(t,s,r)$. It is easy to see that then the width of $\pi(C)$ is equal to the number of branch leaves of $T$. The rest of the proof works analogously with the only difference that we append $s$ to the beginning and not to the end of the GS ordering of $G$.

    Now we reduce the General Intermezzo problem where $\pi(C)$ is a cs-tree to the rooted \cl-tree recognition problem of GS. Let $(A,C)$ be an instance where $\pi(C)$ is a cs-tree of height $h$ and width $k$. We define the vertex set to be $V(G) := \{v^1, v^2~|~v \in A\}$. Let $H$ be the Hasse diagram of~$\pi(C)$. The set of edges of $T$ is defined as $E(T) := \{v^1v^2~|~v \in A\} \cup \{u^2v^1~|~uv \in E(H) \wedge (u,v) \in \pi(C)\}$. Let $(x,y,z) \in C$ and let $y = w_0, w_1, \dots, w_\ell = z$ be the elements of the path between $y$ and $z$ in the Hasse diagram of~$\pi(C)$. We add a non-tree edge to $G$ from $x^2$ to any vertex $w^1_i$ and $w^2_i$ with $i \geq 1$. It is obvious that the constructed tree $T$ has height $2h-1$ and $k$ branch leaves if it is rooted in the minimal element of~$\pi(C)$.

    First assume that there is an intermezzo ordering $\sigma$ for $(A,C)$. Then let $\sigma'$ be the ordering that is constructed by replacing every element $v \in A$ in $\sigma$ by the ordering $(v^1, v^2)$. Then we claim that $\sigma'$ is a GS ordering of $G$ having \cl-tree $T$. Let $T'$ be the \cl-tree of $\sigma'$. Obviously, it holds for $\sigma'$ that any vertex is to the right of its parent in $T$ since $\sigma'$ is constructed from a linear extension of $\pi(C)$. Therefore, $\sigma'$ is a GS ordering of $G$. Furthermore, any vertex $v^2$ has parent $v^1$ both in $T$ and $T'$ since these two vertices are consecutive in $\sigma'$. Assume for contradiction that there is a vertex $v^1$ whose parent in $T$ is $u^2$ but the parent of $v^1$ in $T'$ is $t^p$ with $u^2 \neq t^p$. By construction it holds that $t^p = t^2$ since $v^1$ has no neighbor with index~1. It holds that $u^2 \prec_{\sigma'} t^2 \prec_{\sigma'} v^1$ and $t^2v^1$ is an edge in $E(G) \setminus E(T)$. This non-tree edge has been added to $E(G)$ because of some triple $(t, a, b) \in C$ where $a$ is an ancestor of $v$ and $b$ is a descendant of $v$ in the Hasse diagram of $\pi(C)$ (or $b = v$). However, by construction of $\sigma'$, it holds that $a^2 \prec_{\sigma'} u^2 \prec_{\sigma'} t^2 \prec_{\sigma'} v^1 \prec_{\sigma'} b^2$. Hence $a \prec_\sigma t \prec_\sigma b$; a contradiction to the fact that $\sigma$ is an intermezzo ordering of $C$.

    Now assume that there is a GS ordering $\sigma$ of $G$ having $T$ as its \cl-tree.  We construct the ordering $\sigma'$ of $A$ as follows. Consider the subordering of $\sigma$ containing only the vertices $v^2$ for any $v \in A$ and replace $v^2$ by $v$. We claim that $\sigma'$ is an intermezzo ordering of $(A,C)$. Let $(x,y,z)$ be a triple in $C$. It holds that $y \prec_\sigma z$ since $y^2$ is an ancestor of $z^2$ in $T$. Assume for contradiction that $y \prec_{\sigma'} x \prec_{\sigma'} z$. Consider the $w_0, \dots, w_\ell$ as defined above. Note that these vertices appear by ascending index in $\sigma'$. Let $w_i$ be the leftmost of these vertices in $\sigma'$ that is to the right of $x$. Then it holds that $y \preceq_{\sigma'} w_{i-1} \prec_{\sigma'} x \prec_{\sigma'} w_i \preceq_{\sigma'} z$. This implies that $w^2_{i-1} \prec_{\sigma} x^2 \prec_{\sigma} w^2_i$. Now we consider two cases. If $w^1_i \prec_\sigma x^2$, then it must hold that $w^2_{i-1} \prec_{\sigma} w^1_i \prec_{\sigma} x^2 \prec_{\sigma} w^2_i$ as $w^2_{i-1}$ is the parent of $w^1_i$ in $T$. However, by construction $x^2$ is adjacent to $w^2_i$ and, hence, $w^1_i$ cannot be the parent of $w^2_i$ in the \cl-tree of $\sigma$. If $x_2 \prec_\sigma w^1_i$, then it must hold that $w^2_{i-1} \prec_{\sigma} x^2 \prec_{\sigma} w^1_i \prec_{\sigma} w^2_i$ as $w^1_i$ is the parent of $w^2_i$ in $T$. However, by construction $x^2$ is adjacent to $w^1_i$ and, hence, $w^2_{i-1}$ cannot be the parent of $w^1_i$ in the \cl-tree of $\sigma$. This contradicts the choice of $\sigma$.       
\end{proof}

\noindent Combining the first statement of the lemma with \cref{thm:main} yields the following result.

\begin{theorem}\label{thm:general-np}
    The General Intermezzo problem is \NP-complete even if the input is restricted to instances $(A,C)$ where $\pi(C)$ is a cs-tree of height~$6$.
\end{theorem}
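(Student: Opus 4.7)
The plan is to obtain \cref{thm:general-np} as a direct consequence of \cref{thm:main} combined with the first statement of \cref{lemma:tree-intermezzo}. First I would observe that membership in \NP{} is immediate: any intermezzo ordering of $A$ is a certificate of length $|A|$, and checking each of the $|C|$ triple constraints takes constant time, so verification runs in $\O(|A|+|C|)$.

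For \NP-hardness, my plan is to invoke \cref{thm:main}, which yields instances $(G,T)$ of the rooted \cl-tree recognition problem of GS with $T$ a spanning tree of height~$5$ on which the decision problem is already \NP-hard. I would then feed such a pair through the polynomial-time reduction given in the proof of \cref{lemma:tree-intermezzo}(1). That lemma already establishes logical correctness of the reduction and guarantees that the resulting order $\pi(C)$ is a cs-tree of height $5+1 = 6$, placing the constructed instance in the promised subclass; chaining the two results therefore immediately yields \NP-completeness.

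The only step that is worth spelling out carefully, and the only place where the numeric constant in the theorem statement is fixed, is the height bookkeeping. Inspecting constraints (C\ref{item:c1})--(C\ref{item:c3}) of the reduction, the Hasse diagram of $\pi(C)$ is exactly $T$ together with a single fresh element $s$ attached as a child of some child $t$ of the root~$r$. Consequently the longest chain in $\pi(C)$ is either a longest root-to-leaf chain of $T$ (with $5+1=6$ elements, since $T$ has tree-height~$5$) or the chain $r \prec_\pi t \prec_\pi s$ (with only~$3$ elements), and the former dominates. Since the substantive combinatorial work is already contained in \cref{thm:main,lemma:tree-intermezzo}, this routine check is the only obstacle, and it is trivially cleared.
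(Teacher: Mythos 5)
Your proposal is correct and matches the paper's own argument exactly: the paper proves \cref{thm:general-np} precisely by combining \cref{thm:main} (hardness for rooted spanning trees of height~$5$) with the first statement of \cref{lemma:tree-intermezzo} (equivalence with General Intermezzo on cs-trees of height $h+1$). Your additional remarks on \NP-membership and the height bookkeeping for the element $s$ are accurate and only make explicit what the paper leaves implicit.
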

We can extend this result to the Intermezzo problem as follows.

\begin{lemma}\label{lemma:general-to-normal}
    The General Intermezzo Problem for instances $(A,C)$ where $\pi(C)$ is a cs-tree of width $k$ is polynomial-time reducible to the Intermezzo Problem for instances $(A',B',C')$ where $\pi(B',C')$ is a cs-tree of width $k$.
\end{lemma}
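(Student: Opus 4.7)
My plan is a two-stage polynomial-time reduction that preserves the cs-tree structure and width $k$ of the induced partial order.

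Stage 1 (Normalization). I reduce to the case where in every triple $(x, y, z) \in C$ the pair $(y, z)$ is a cover relation of $\pi(C)$. If $y$ is a strict ancestor of $z$ along the Hasse-diagram path $y = y_0 \prec y_1 \prec \cdots \prec y_\ell = z$, I replace $(x, y, z)$ by the $\ell$ triples $(x, y_{i-1}, y_i)$. In every linear extension of $\pi(C)$, the nodes $y_0, \ldots, y_\ell$ appear in increasing index order, so the intervals $[y_{i-1}, y_i]$ decompose $[y, z]$; consequently, ``$x$ lies outside $[y, z]$'' is logically equivalent to the conjunction ``$x$ lies outside $[y_{i-1}, y_i]$ for every $i$''. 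Since this replacement introduces no new cover relations, $\pi(C)$, its cs-tree structure, and its width $k$ are unchanged.

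Stage 2 (Disjointification). For each normalized triple $t = (x, y, z)$, I introduce fresh copies $y_t, z_t$ of the endpoints and, when needed, a fresh copy $x_t$ of $x$; then I put the triple $(x_t, y_t, z_t)$ into $C'$. The copies $y_t, z_t$ are introduced by subdividing the Hasse-diagram edge $(y, z)$: if $s$ triples share this edge, the edge is replaced by the chain $y \prec y_{t_1} \prec z_{t_1} \prec y_{t_2} \prec z_{t_2} \prec \cdots \prec y_{t_s} \prec z_{t_s} \prec z$. The copies $x_t$ are placed as chain extensions attached below leaves of the Hasse diagram in a way that does not introduce new branching, and appropriate linking pairs are added to $B'$ to tie each $x_t$ to $x$. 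Because subdivisions and leaf-extensions preserve the set of leaves, the Hasse diagram of $\pi(B', C')$ remains a rooted tree with $k$ leaves, so $\pi(B', C')$ is again a cs-tree of width $k$; and by construction the triples in $C'$ are pairwise disjoint.

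The main obstacle is verifying correctness in both directions. The forward direction, lifting an intermezzo ordering $\sigma$ of $(A, C)$ to one of $(A', B', C')$, should follow by inserting each $y_t, z_t$ adjacent to $y, z$ in $\sigma$ and placing each $x_t$ immediately next to $x$. The reverse direction is the delicate one: in a candidate ordering $\sigma'$ of $A'$, the constraint ``$x_t$ outside $[y_t, z_t]$'' is a priori weaker than ``$x$ outside $[y, z]$'', and the gaps between $y$ and $y_t$ or between $z_t$ and $z$ in $\sigma'$ could in principle be populated by elements that violate the original constraint. I expect the proof to hinge on choosing the linking pairs in $B'$ so that every $x_t$ is forced to lie on the same side of $[y_t, z_t]$ as $x$ lies relative to $[y, z]$, and on arguing, via the chain structure introduced by the subdivision, that the only elements that can appear in these gaps in $\sigma'$ are subdivision or copy elements, which do not feature in any original triple. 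Producing these linking pairs while preserving the tree shape of $\pi(B', C')$ and the width $k$ is the technical crux of the argument.
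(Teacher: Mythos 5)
Your Stage 1 (splitting each triple along the Hasse path so that $(y,z)$ is a cover relation) is harmless, but Stage 2 contains a genuine gap, and it sits exactly where the real difficulty of the lemma lies: you cannot tie a fresh copy $x_t$ (or $y_t,z_t$) to the original element using only order pairs in $B'$. A pair can force $x_t$ before or after fixed elements, but it cannot force $x_t$ to lie on the same side of an interval as $x$ does, and it cannot prevent incomparable elements from landing in the gaps $[y,y_{t_1}]$, $[z_{t_i},y_{t_{i+1}}]$, $[z_{t_s},z]$ created by your subdivision. Concretely, your new triples $(x_t,y_{t_i},z_{t_i})$ only exclude $x_t$ from subintervals of $[y,z]$, so an ordering may place $x$ (or $x_t$) inside one of the gaps: every new constraint is satisfied while the original one is violated, so the reduction is unsound. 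Both enforcement tasks you defer to ``linking pairs'' intrinsically require betweenness constraints, i.e.\ further triples, which must themselves be pairwise disjoint --- this circularity is precisely what the construction has to break, and your outline does not break it.

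The paper's proof breaks it by a different mechanism. Each original element $v_i$ is replaced by a chain $a^i \prec b^i_1 \prec \dots \prec b^i_n \prec c^i_1 \prec \dots \prec c^i_{m_i} \prec d^i$ (an edge-by-edge expansion of the Hasse tree, so the cs-tree shape and the width $k$ are preserved for free); the original triple $\gamma^i_\ell=(v_i,v_h,v_j)$ becomes $(c^i_\ell,a^h,d^j)$, with a dedicated interior point $c^i_\ell$ per triple so that first coordinates are automatically distinct; and a quadratic family of auxiliary triples $(b^i_j,a^j,d^j)$ --- whose first elements $b^i_j$ are pairwise distinct --- enforces the structural claim that $a^j \prec_\sigma a^i$ implies $c^j_\ell \prec_\sigma d^i$. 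That claim is what makes ``$c^i_\ell$ outside $[a^h,d^j]$'' equivalent to ``$v_i$ outside $[v_h,v_j]$''. Collisions in the second and third coordinates are then removed by prepending a chain $a^i_1,\dots,a^i_\ell$ \emph{before} $a^i$ and appending a chain \emph{after} $d^j$, which enlarges (rather than shrinks) the forbidden interval and therefore keeps the reduction sound. To salvage your approach you would need an analogue of that auxiliary triple family; subdivision plus pair-linking alone cannot work.
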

\fullversion{\begin{proof}
    Let $(A,C)$ be an instance of the General Intermezzo Problem where $\pi(C)$ is a cs-tree of width $k$. Let $T$ be the rooted tree that corresponds to the Hasse diagram of $\pi(C)$. In the following we will construct an equivalent instance $(A', B', C')$ of the intermezzo problem.

    We construct this instance in three steps. First, we ensure that any element $x \in A'$ that appears as the first element of a tuple in $C'$ occurs only in one tuple. Afterwards, we will do the same for the second and the third element. 

    Let $A = \{v_1, \dots, v_n\}$. For any $v_i$ let $\gamma^i_1, \dots, \gamma^i_{m_i}$ be the triples in $C$ containing $v_i$ as first element. We add the following elements to $A'$:

    \begin{enumerate}
        \item $a^i$ and $d^i$ for any $1 \leq i \leq n$,
        \item $b^i_{j}$ for any $1 \leq i,j \leq n$,
        \item $c^i_j$ for any $1 \leq i \leq n$, $1 \leq j \leq m_i$.
    \end{enumerate}

    We order these elements in $B'$ linearly as follows: $a^i \prec_{B'} b^i_{1} \prec_{B'} b^i_{2} \prec_{B'} \dots \prec_{B'} b^i_{n} \prec_{B'} c^i_1 \prec_{B'} \dots \prec_{B'} c^i_{m_i} \prec_{B'} d^i$. Furthermore, we add $(d^i,a^j)$ to $B'$ if $(v_i,v_j) \in \pi(C)$.
    Note that, so far, the partial order $\pi(B', C')$ is still a cs-tree of width $k$.

    Next we consider the following set of triples.
    \[
    C'_1 = \{(b^i_{j}, a^j, d^j)~|~1 \leq i,j \leq n, i \neq j\}.\] 
    
    Let $\sigma$ be an intermezzo ordering of $A'$ fulfilling the constraints of $B'$ and those of $C'_1$.

    \begin{claim1}\label{claim:gtn1}
         If $a^j \prec_\sigma a^i$, then $c^j_\ell \prec_\sigma d^i$ for all $\ell \in \{1,\dots, m_j\}$.
     \end{claim1}

     \begin{claimproof}
         Assume for contradiction that $d^i \prec_\sigma c^j_\ell$ for some $\ell \in \{1,\dots,m\}$. Due to the constraints of $B'$, it holds that $a^i \prec_\sigma b^i_j \prec_\sigma d^i$ and $c^j_\ell \prec_\sigma d^j$. Combining this with $a^j \prec_\sigma a^i$, we get that $a^j \prec_\sigma b^i_j \prec_\sigma d^j$. However, this implies that $\sigma$ does not fulfill the triple $(b^i_{j}, a^j, d^j) \in C'_1$; a contradiction.
     \end{claimproof}

     Next we consider the following set of triples.
     \[
        C'_2 := \{(c^i_\ell, a^h, d^j)~|~\gamma^i_\ell = (v_i, v_h, v_j)\}
     \]

    \begin{claim1}
        There is an intermezzo ordering of $A$ fulfilling the constraints given by $B$ and $C$ if and only if there is an intermezzo ordering of $A'$ fulfilling the constraints given by $B'$, $C'_1$, and $C'_2$.
    \end{claim1}

    \begin{claimproof}
        Let $\sigma$ be an intermezzo ordering of $A$ fulfilling the constraints given by $B$ and $C$. We replace any $v_i$ in $\sigma$ by the ordering $a^i, b^i_1, \dots, b^i_n, c^i_1, \dots, c^i_{m_i}, d^i$ and get the ordering $\sigma'$ of $A'$. It is obvious that $\sigma'$ fulfills the constraints given by $B'$. It fulfills the constraints of $C'_1$ since for any element $b^i_j$ between $a^j$ and $d^j$ it holds that $i = j$. Furthermore, if for some triple $(c^i_\ell, a^h, d^j) \in C'_2$ it holds that $a^h \prec_{\sigma'} c^i_\ell \prec_{\sigma'} d^j$, then $v_h \prec_\sigma v_i \prec_\sigma v_j$. However, this is not possible as there is a triple $(v_i, v_h, v_j) \in C$ which is fulfilled by $\sigma$.

        Now let $\sigma'$ be an intermezzo ordering of $A'$ fulfilling the constraints given by $B'$, $C'_1$ and $C'_2$. Consider the subordering of $\sigma'$ that only contains all the elements $a^i$ for any $i \in \{1,\dots,n\}$. We replace any $a^i$ in that ordering by $v_i$ and get the ordering $\sigma$ of $A$. This ordering fulfills the constraints given by $B$. Now assume for contradiction that $\sigma$ does not fulfill the constraint $\gamma^i_\ell = (v_i, v_h, v_j)$, i.e., $v_h \prec_\sigma v_i \prec_\sigma v_j$. By construction of $\sigma$, it holds that $a^h \prec_{\sigma'} a^i \prec_{\sigma'} a^j$. Due to the constraints in $B'$ we know that $a^i \prec_{\sigma'} c^i_\ell$. Furthermore, by \cref{claim:gtn1}, it holds that $c^i_\ell \prec d^j$. Hence, it holds that $a^h \prec_{\sigma'} c^i_\ell \prec_{\sigma'} d^j$. However, this implies that $\sigma'$ does not fulfill the constraint $(c^i_\ell, a^h, d^j)$; a contradiction.
    \end{claimproof}

    Let $C' = C'_1 \cup C'_2$. Note that any $b^i_j$ and any $c^i_h$ appears in at most one triple of $C'$. Furthermore, the elements $a^i$ and $d^i$ only appear as second or third element of triples, respectively. To make the instance $(A', B', C')$ an instance of the intermezzo problem, we have to ensure that $a^i$ and $d^i$ also appear at most once in a triple of $C'$.

    Let $a^i$ be an arbitrary element and let $(x_1, a^i, z_1), \dots, (x_\ell, a^i, z_\ell)$ be the triples in $C'$ where $a^i$ appears in. Furthermore, let $d^j$ be the parent of $a^i$ in the Hasse diagram of $\pi(B',C')$ if there is one. We add the new elements $a^i_1, \dots, a^i_\ell$ to $A'$. Furthermore, we add the following tuples to $B$: $(a^i_t, a^{i}_{t+1})$ for any $t \in \{1,\dots, \ell - 1\}$, $(a^i_\ell, a)$, and $(d^j, a^i_1)$. For any $t \in \{1, \dots, \ell\}$, we replace $(x_t, a^i, z_t)$ in $C'$ by $(x_t, a_t^i, z_t)$. We repeat this procedure for every $a^i \in A'$. Let $(A'',B'',C'')$ be the resulting instance. Note that any element of $A''$ that appears as second element in some triple of $C''$ does not appear in any other triple of $C''$. Furthermore, $\pi(B'',C'')$ is still a cs-tree of width $k$.
    
    \begin{claim1}\label{claim:general-to-normal1}
        There is an intermezzo ordering of $A'$ fulfilling the constraints given by $B'$ and $C'$ if and only if there is an intermezzo ordering of $A''$ fulfilling the constraints given by $B''$ and $C''$.
    \end{claim1}

    \begin{claimproof}
    Let $\sigma''$ be an intermezzo ordering for $(A'', B'', C'')$. Let $\sigma'$ be the subordering of $\sigma''$ restricted to $A'$. It is obvious that $\sigma'$ fulfills the constraints of $B'$. Assume for contradiction that some original triple $(x_t, a^i, z_t)$ is not fulfilled, i.e.,  $a^i \prec_{\sigma'} x_t \prec_\sigma z_t$. As $(a^i_t, a^i)$ is an element of $B''$, we know that $a^i_t \prec_{\sigma''} a^i$. However, this is a contradiction to the choice of $\sigma''$ since $(x_t, a^i_t, z_t)$ is an element of $C''$.

    For the other direction let $\sigma'$ be an intermezzo ordering of $(A', B', C')$. For any $a^i \in A'$, we add the elements $a^i_1, \dots, a^i_\ell$ directly before $a^i$ in $\sigma'$. We call the resulting ordering $\sigma''$. This ordering fulfills the constraints given by $B''$. Assume for contradiction that there is some triple $(x_t, a^i_t, z_t)$ in $C''$ that is not fulfilled by $\sigma''$, i.e., $a^i_t \prec_{\sigma''} x_t \prec_{\sigma'} z_t$. Then it holds that $a^i \prec_{\sigma'} x_t \prec_{\sigma''} z_t$; a contradiction to choice of $\sigma'$ since $(x_t, a^i, z_t)$ is in $C'$.
    \end{claimproof}

    To ensure that any $d^j$ appears in at most one triple, we can use the same construction. The only difference is that the new elements $d^j_1, \dots, d^j_\ell$ are made descendants of $d^j$ in the Hasse diagram of $\pi(B'',C'')$. The proof of correctness works like the proof of \cref{claim:general-to-normal1}.
\end{proof}}
This lemma implies the following complexity result.

\begin{theorem}\label{thm:intermezzo-np}
    The Intermezzo problem is \NP-complete even if the input is restricted to instances $(A,B,C)$ where $\pi(B,C)$ is a cs-tree.
\end{theorem}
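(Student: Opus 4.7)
The plan is to chain \cref{thm:general-np} and \cref{lemma:general-to-normal} together in a direct two-step argument. Membership in \NP{} is immediate, since any candidate ordering $\sigma$ of $A$ serves as a certificate and every pair constraint in $B$ together with every triple constraint in $C$ can be checked in polynomial time by a single scan of $\sigma$.

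For the hardness direction, I would start with an instance $(A,C)$ of the General Intermezzo problem in which $\pi(C)$ is a cs-tree; by \cref{thm:general-np}, this restricted variant is already \NP-complete (the construction there actually produces a cs-tree of height~$6$, although we do not need to retain that bound in the statement being proved). Applying the reduction of \cref{lemma:general-to-normal} to $(A,C)$ then yields, in polynomial time, an equivalent instance $(A',B',C')$ of the Intermezzo problem such that $\pi(B',C')$ is again a cs-tree. The equivalence of the two instances and the preservation of the cs-tree property of the induced partial order are precisely the guarantees supplied by the lemma, so the composition is a polynomial-time reduction from a known \NP-complete problem and establishes \NP-completeness of the Intermezzo problem under the restriction in question.

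I do not anticipate a genuine obstacle: the substantive work is already encapsulated in \cref{thm:general-np} (which provides the hardness source via the earlier reduction from the \cl-tree recognition problem) and in \cref{lemma:general-to-normal} (which performs the structure-preserving conversion from the General Intermezzo problem to the Intermezzo problem). The one point that deserves a brief check when writing the proof out is that the lemma is being invoked without any bound on the width parameter $k$; this is unproblematic because the lemma is stated uniformly in $k$, and its construction introduces the new elements either along existing chains or as fresh descendants or ancestors of already present nodes, so the Hasse diagram of $\pi(B',C')$ remains a rooted tree. With that sanity check the theorem follows by direct composition.
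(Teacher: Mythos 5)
Your proposal matches the paper's own argument exactly: the theorem is stated there as an immediate consequence of \cref{thm:general-np} combined with the reduction of \cref{lemma:general-to-normal}, with \NP-membership being routine. Your added sanity check that the lemma preserves the cs-tree property uniformly in the width $k$ is correct and consistent with the lemma's statement.
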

As was shown by Wolk~\cite{wolk1965comparability}, cs-trees have dimension~2. Combining this with \cref{thm:general-np,thm:intermezzo-np}, we get the following result.

\begin{corollary}
    The (General) Intermezzo problem is \NP-complete even if $\pi(B,C)$ or $\pi(C)$, respectively, has dimension at most 2. 
\end{corollary}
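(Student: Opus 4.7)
The plan is almost immediate from the two preceding theorems together with Wolk's dimension bound. \Cref{thm:general-np} establishes \NP-completeness of the General Intermezzo problem on instances where $\pi(C)$ is a cs-tree, and \Cref{thm:intermezzo-np} does the same for the Intermezzo problem with $\pi(B,C)$ a cs-tree. Since Wolk's theorem~\cite{wolk1965comparability} asserts that every cs-tree has order dimension at most two, the hard instances produced by those two reductions automatically belong to the (larger) class of instances whose partial order has dimension at most two.

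First I would check that nothing in the reductions of \Cref{thm:general-np} and \Cref{lemma:general-to-normal} (which combine to give \Cref{thm:intermezzo-np}) enlarges the partial order beyond a cs-tree; this has already been verified explicitly in those proofs, where the Hasse diagram is built directly from a rooted spanning tree and augmented only by linear chains of auxiliary elements hung beneath tree nodes. Then I would observe that membership in \NP{} is inherited from the unrestricted (General) Intermezzo problem, because any candidate linear ordering is a polynomial-size certificate that can be checked against the constraints in $B$ and $C$ in polynomial time. Combining these two observations with Wolk's bound gives the corollary for both variants simultaneously.

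The main ``obstacle'' is purely linguistic: there is no new construction to perform, only a reparametrization. The corollary trades the structural parameter \emph{cs-tree} for the more widely used order-theoretic parameter \emph{dimension at most two}, and the point of recording it is to highlight that even an extremely restrictive dimensional assumption does not help to escape \NP-hardness.
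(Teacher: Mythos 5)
Your argument is exactly the paper's: it derives the corollary by combining \cref{thm:general-np} and \cref{thm:intermezzo-np} with Wolk's result that cs-trees have dimension at most 2. Correct and essentially identical to the paper's reasoning.
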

Note that -- in difference to \cref{thm:general-np} -- we were not able to bound the height of the partial order in \cref{thm:intermezzo-np} since in the proof of \cref{lemma:general-to-normal} the height of the constructed partial order depends on the number of elements in $A$. We can adapt the proof of that lemma such that the height of the partial order increases only by a constant factor. However, in this case, we then loose the property that the partial order is a cs-tree.

\begin{corollary}\label{corr:36}
    The Intermezzo problem is \NP-complete even if the input is restricted to instances $(A,B,C)$ where $\pi(B,C)$ has height~$36$.
\end{corollary}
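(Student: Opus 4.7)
The plan is to revisit the proof of \cref{lemma:general-to-normal} and replace the internal chains that caused the height to blow up with the number of elements by \emph{antichains}. This abandons the cs-tree structure but retains every relation that the correctness argument actually invokes, so the overall blow-up becomes only a constant factor.

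Starting from an instance $(A, C)$ of the General Intermezzo Problem with $\pi(C)$ a cs-tree of height~$6$, which is \NP-complete by \cref{thm:general-np}, I would introduce for every $v_i \in A$ the same auxiliary elements as in the original proof: $a^i$, $d^i$, the $b^i_j$'s and $c^i_\ell$'s, one element $a^i_t$ for every triple in which $a^i$ appears as the second entry, and one element $d^i_t$ for every triple in which $d^i$ appears as the third entry. The triple set $C''$ is defined exactly as in the proof of \cref{lemma:general-to-normal}, so that it remains a set of pairwise disjoint triples and each of these auxiliary elements occurs in at most one triple.

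The key change is the partial order $\pi(B'', C'')$. For each $v_i$, I arrange its gadget into six layers, each layer being an antichain, with every element of layer~$k$ placed strictly below every element of layer~$k+1$: (1) the $a^i_t$'s, (2) the singleton $\{a^i\}$, (3) the $b^i_j$'s, (4) the $c^i_\ell$'s, (5) the singleton $\{d^i\}$, and (6) the $d^i_t$'s. Whenever $v_i \prec_{\pi(C)} v_j$ I make every element of the gadget of $v_i$ less than every element of the gadget of $v_j$ in $B''$. Each gadget contributes at most six elements to any chain, and the six levels of the original cs-tree stack at most six gadgets on top of each other, giving a maximum chain length of $6 \cdot 6 = 36$.

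For correctness I would replay the three steps of the original proof and check that each claim survives. The claims in \cref{lemma:general-to-normal} rely only on the relations $a^i \prec b^i_j$, $b^i_j \prec d^i$, $a^i \prec c^i_\ell$, $c^i_\ell \prec d^i$, $a^i_t \prec a^i$, $d^j \prec d^j_t$, and $d^i \prec a^j$ whenever $v_i \prec_{\pi(C)} v_j$, all of which still hold by the layered construction. The specific linear orderings of the $b$'s, $c$'s, $a_t$'s, and $d_t$'s that the original forward directions used inside a gadget can be chosen as any linear extension of the respective antichains; the verifications invoke only the partial order and not those particular orderings. The main obstacle I expect is confirming that no additional chain sneaks in through the inter-gadget edges: since these edges simply stack one antichain layer below another, any maximal chain still visits at most six elements in each of at most six stacked gadgets, which yields the claimed bound of $36$.
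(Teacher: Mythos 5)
Your proposal is correct and matches the paper's own argument: the paper likewise keeps the triples of \cref{lemma:general-to-normal} unchanged but relaxes the intra-gadget order to the six stacked levels $a^i_t \prec a^i \prec b^i_j \prec c^i_\ell \prec d^i \prec d^i_t$ (it phrases this as replacing each element by ``parallel chains of size~6''), observes that correctness only uses those comparabilities, and multiplies by the height~$6$ from \cref{thm:general-np} to get~$36$. Your write-up is just a more explicit version of the same construction.
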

\fullversion{\begin{proof}
    We adapt the proof of \cref{lemma:general-to-normal}. In the first part of the proof it is only necessary that $a^i$ is before any $b^i_j$, any $b^i_j$ is before any $c^i_j$ and any $c^i_j$ is before $d^i$. In the second and the third part we only have to ensure that any $a^i_t$ is before $a^i$ and that any $d^j_t$ is after $d^j$. Adding all these constraints to $B$ only increases the height of the partial order by factor~6 since any element of $A$ is replaced by parallel chains of size~6. It follows from \cref{thm:general-np} that we can bound the height by~36. 
\end{proof}}
The complexity of the Intermezzo problem for cs-trees of bounded height remains open.

\subsection{The Intermezzo Problem for Partial Orders of Bounded Width}

As we have seen in the section above, the (General) Intermezzo problem is \NP-complete even if the height or the dimension of the partial order is bounded. One may ask whether this also holds for another notable parameter of partial orders, the width. Adapting an idea of Colbourn and Pulleyblank~\cite{colbourn1985minimizing} (explained in more detail in~\cite{beisegel202?computing}), we can show that -- unless $\P = \NP$ -- this is not the case as we can give an $\XP$-algorithm for the General Intermezzo problem parameterized by the width of $\pi(C)$.

\begin{theorem}\label{thm:xp-intermezzo}
    The General Intermezzo problem can be solved in time $\O(k\cdot n^{k+2})$ on any instance $(A,C)$ where $n = |A|$ and $k$ is the width of $\pi(C)$.
\end{theorem}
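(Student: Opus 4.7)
The plan is to run a dynamic program over compact descriptions of prefixes of the desired linear extension, in the spirit of the Colbourn--Pulleyblank chain-decomposition technique. By Dilworth's theorem, $\pi(C)$ can be partitioned into $k$ chains $L_1,\ldots,L_k$ in polynomial time (e.g., via bipartite matching). I would then identify each DP state with a tuple $\mathbf{i}=(i_1,\ldots,i_k)$ with $0\le i_j\le |L_j|$, representing the set $S(\mathbf{i})$ that contains the first $i_j$ elements of $L_j$ for each $j$. Since $\prod_j(|L_j|+1)=\O(n^k)$, the state space is polynomial for constant $k$.

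Next I would declare $\mathbf{i}$ \emph{reachable} when $S(\mathbf{i})$ is a down-set of $\pi(C)$ that admits some internal ordering $\tau$ under which every triple $(x,y,z)\in C$ with $x\in S(\mathbf{i})$ is already discharged. The initial state $(0,\ldots,0)$ is trivially reachable, and the answer is YES iff $(|L_1|,\ldots,|L_k|)$ is reachable. Transitions go from $\mathbf{i}$ to the state obtained by incrementing some coordinate $i_j$ and appending $v:=L_j[i_j+1]$; such a transition is allowed iff \textbf{(a)} all $\pi(C)$-predecessors of $v$ lie in $S(\mathbf{i})$, and \textbf{(b)} no triple $(v,y,z)\in C$ satisfies $y\in S(\mathbf{i})$ and $z\notin S(\mathbf{i})$. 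Condition~(b) exactly says that placing $v$ now respects every triple with first entry $v$: either $y$ has not yet been placed (so $v$ precedes $y$) or both $y$ and $z$ are placed (so $v$ comes after $z$), while $y\prec_\sigma z$ is guaranteed by condition~(a) applied to the step at which $z$ was placed.

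The hard part will be the correctness invariant: that reachability depends solely on the set $S=S(\mathbf{i})$, not on the internal order used to realize it. The key observation I would prove as a lemma is this. For a triple $(x,y,z)\in C$ with $x\in S$, condition~(b) at the step $x$ was placed has already settled the constraint; for a triple with $x\notin S$, the obligation it imposes on the suffix is determined entirely by which of $y,z$ lie in $S$ (for example, if $y\in S$ and $z\notin S$, then $z$ must precede $x$ in the suffix, and no other alternative remains). Consequently, if $\sigma_1$ and $\sigma_2$ are two valid orderings of the same $S$, then every suffix that extends $\sigma_1$ to an intermezzo ordering also extends $\sigma_2$. A short induction on $|S|$ then yields soundness; completeness is immediate because every prefix of a genuine intermezzo ordering meets (a) and (b) at each step.

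Finally, for the running-time bound I would maintain the set $S(\mathbf{i})$ as a bitmask so that membership queries cost $\O(1)$, and I would precompute for each vertex $v$ the list of triples having $v$ as first entry, of which there are at most $\O(n^2)$. Each of the $k$ outgoing transitions from a state then costs $\O(n^2)$ to verify, giving a total of $\O(n^k)\cdot k\cdot\O(n^2)=\O(k\cdot n^{k+2})$. An explicit intermezzo ordering is recovered in the same time bound by following parent pointers. The only substantive step, as indicated, is the state-equivalence lemma; everything else is book-keeping.
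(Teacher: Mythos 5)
Your proposal matches the paper's proof essentially step for step: the same Dilworth chain decomposition, the same tuple-indexed prefix states, the same two transition checks (minimality of the new element in the remaining poset, and the absence of a triple $(v,y,z)$ with $y$ placed and $z$ unplaced), and the same $\O(n^k)\cdot k\cdot \O(n^2)$ accounting. The state-equivalence lemma you flag as the substantive step is indeed the crux that the paper leaves implicit in its sketch, and your articulation of it is correct.
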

\begin{proof}
    We only sketch the idea of the algorithm; for a comprehensive description and analysis of a similar algorithm see~\cite{beisegel202?computing}. Using Dilworth's Chain Covering Theorem~\cite{dilworth1987decomposition}, we can partition the set $A$ into $k$ disjoint chains of $\pi(C)$. Now the set of elements of any prefix $\sigma^\text{pre}$ of a linear extension of $\pi(C)$ can be represented by a tuple $(a_1, \dots, a_k) \in \{0,\dots,|A|\}^k$ where $a_i$ represents the number of elements of chain $i$ that are part of $\sigma^\text{pre}$. Since all elements of a chain are strictly ordered, the number of used elements of the chain directly implies the elements of the chain that are part of the prefix set. 

    Now the algorithm uses dynamic programming to compute whether a given prefix set can be reached in such a way that it fulfills all conditions of $C$. To this end, we have a table $M$ with 0-1-entries for every tuple representing a prefix. We fill the entries of this table inductively, starting with those tuples whose entries sum up to 1. Such a tuple gets a 1-entry in $M$ if and only if the minimal element of the respective chain is a minimal element of the partial order. For tuples $\gamma = (a_1, \dots, a_k)$ with larger entry sums we check for any tuple $\gamma'$ that is constructed by decrementing exactly one non-zero entry of $\gamma$, say $a_i$, the following:
    \begin{enumerate}
        \item Is the $M$-entry of tuple $\gamma'$ equal to 1?
        \item Is the $a_i$-th element $x$ of the $i$-th chain minimal in $\pi(C)$ restricted to those elements that are not part of the prefix set encoded by $\gamma'$?
        \item Is there no triple $(x,y,z)$ with $y$ is an element of the prefix set encoded by $\gamma'$ and $z$ is not such an element?
    \end{enumerate}
    If the answer to all three question is yes, then we set the $M$-entry of $\gamma$ to 1. It is easy to check that $M$ has $\O(n^k)$ entries and for any entry we can answer the three questions above for all triples $\gamma'$ in time $\O(kn^2)$. This leads to the claimed running time.
\end{proof}
\Cref{lemma:tree-intermezzo} implies an \XP-algorithm for the rooted \cl-tree recognition of GS parameterized by the number of branch leaves of the given spanning tree.

\begin{corollary}
    The rooted \cl-tree recognition problem of Generic Search can be solved in time $\O(k\cdot n^{k+2})$ on a graph $G$ and a rooted spanning tree $T$ of $G$ where $n = |V(G)|$ and $k$ is the number of branch leaves of $T$.
\end{corollary}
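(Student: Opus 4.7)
The plan is to combine the two results that immediately precede this corollary. By the second part of \cref{lemma:tree-intermezzo}, any instance of the rooted \cl-tree recognition problem of Generic Search consisting of a graph $G$ on $n$ vertices and a rooted spanning tree $T$ with $k$ branch leaves reduces in polynomial time to a General Intermezzo instance $(A,C)$ for which $\pi(C)$ is a cs-tree of width exactly $k$. Then the \XP{} algorithm from the preceding theorem solves this instance in the claimed running time.

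To check that the quoted bound of $\O(k\cdot n^{k+2})$ is preserved, I would inspect the reduction: the set $A$ contains the vertices of $G$ together with the single auxiliary element $s$, so $|A| = n+1$, and the triples in $C$ (types (C1), (C2), (C3) in the proof of \cref{lemma:tree-intermezzo}) can be enumerated in polynomial time from $G$ and $T$. Because the running time of the dynamic program is $\O(k\cdot |A|^{k+2})$ in the ground-set size rather than in $|C|$, substituting $|A| = n+1$ gives precisely $\O(k\cdot n^{k+2})$, and the cost of constructing $(A,C)$ is dominated by this. Therefore the combined procedure runs in the stated time and correctly decides the rooted \cl-tree recognition problem.

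There is essentially no obstacle beyond bookkeeping: the only fact one needs to verify carefully is that the parameter transfers on the nose, namely that the number $k$ of branch leaves of $T$ equals the width of $\pi(C)$ in the constructed instance. This is exactly what is shown in the second part of \cref{lemma:tree-intermezzo}, so the corollary follows immediately.
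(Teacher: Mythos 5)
Your proposal is correct and follows exactly the paper's (implicit) argument: the corollary is stated as a direct consequence of the second part of \cref{lemma:tree-intermezzo} combined with the preceding \XP{} algorithm, which is precisely the composition you describe. Your additional bookkeeping that $|A|=n+1$ and that the width of $\pi(C)$ equals the number of branch leaves is the right sanity check and matches what the lemma guarantees.
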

One may ask whether this rather poor running time bound given in \cref{thm:xp-intermezzo} can be improved significantly and whether there is an \FPT{} algorithm for the (General) Intermezzo problem parameterized by the width of the partial order. We will show that -- under certain assumptions~-- this is not the case.

\begin{theorem}\label{thm:w1}
    The (General) Intermezzo problem is \W-hard if it is parameterized by the width $k$ of $\pi(C)$ or $\pi(B,C)$, respectively, even if that partial order is a cs-tree. Furthermore -- under the assumption of the Exponential Time Hypothesis -- there is no algorithm that solves the problem in time $f(k) \cdot n^{o(k)}$ for any computable function $f$ where $n = |A|$.
\end{theorem}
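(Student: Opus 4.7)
We would prove hardness for the General Intermezzo problem; the Intermezzo version then follows from \cref{lemma:general-to-normal}, which preserves the width parameter. The reduction is from \emph{Multicolored Clique} parameterized by the number of colors $k$, which is \W-hard and, under ETH, admits no $f(k)\cdot n^{o(k)}$-time algorithm. By \cref{lemma:tree-intermezzo}(2) it suffices to realize the reduction inside the rooted \cl-tree recognition of GS, with parameter equal to the number of branch leaves of the produced spanning tree.

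Given a Multicolored Clique instance $(H,V_1,\dots,V_k)$ with $|V_i|=n$, we would build a graph $G$ with a rooted spanning tree $T$ which is a spider: a root $r$ together with $k$ long branches $B_1,\dots,B_k$, giving $T$ exactly $k$ branch leaves. Each branch $B_i$ plays the role of a ``ruler'' with $n$ distinguished positions encoding the $n$ candidate vertices for color $i$, the intended semantics being that vertex $j\in V_i$ is selected when the interleaving of the other branches starts between the $j$-th and $(j{+}1)$-st position of $B_i$. Non-tree edges of $G$ would then act as constraints, each one enforcing a U-bend via \cref{lemma:edge}. We would add two families of them: \emph{synchronization edges} across every pair $(B_i,B_j)$, forcing the selection encoded by $B_i$ to be perceived identically by every $B_j$; and \emph{non-edge blockers}, one gadget per pair $u\in V_i$, $v\in V_j$ with $uv\notin E(H)$, whose joint U-bend constraints become infeasible precisely when $u$ is selected in $B_i$ and $v$ in $B_j$.

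\textbf{Main obstacle.} The crux is engineering the ruler and synchronization gadgets so that the set of feasible interleavings is in bijection with the $n^k$ multicolored vertex tuples, exactly matching the state space of the XP algorithm from the preceding theorem. This is the \cl-tree analogue of the grid-tiling style constructions that yield $n^{o(k)}$ lower bounds for poset-scheduling problems, and we intend to follow the same template. Once the gadgets are built, the size of the reduction is $\mathrm{poly}(n,k)$ and the output parameter is $k$, so an $f(k)\cdot n^{o(k)}$ algorithm for the Intermezzo or \cl-tree problem would transfer to Multicolored Clique, contradicting ETH and yielding both the \W-hardness and the ETH lower bound of the theorem.
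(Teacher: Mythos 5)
Your high-level strategy is sound: reduce from Multicolored Clique, keep the parameter at $k+O(1)$, and invoke \cref{thm:lower} together with \cref{lemma:general-to-normal} to cover both the \W-hardness and the ETH lower bound for both problem variants. But the proposal stops exactly where the proof begins. The entire mathematical content of this theorem is the gadget construction — the "ruler," the "synchronization edges," and the "non-edge blockers" — and you explicitly defer all of it as the "main obstacle" that you "intend to follow the same template" for. Nothing in the proposal verifies that U-bend constraints (\cref{lemma:edge}) are actually expressive enough to realize these gadgets, and that is far from obvious: every non-tree edge $xy$ imposes a \emph{symmetric} constraint involving the parents of both $x$ and $y$, so you cannot add a "blocker" for the pair $(u,v)$ without it also constraining the branch containing $u$ from the side of $v$ and vice versa. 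Getting $\binom{k}{2}$ pairwise checks to happen in a controlled sequence, so that exactly the selected pair of positions is tested for each color pair, requires an explicit scheduling mechanism that your spider with $k$ branches does not provide.

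The paper's proof goes in the opposite direction and for good reason: it reduces MCP \emph{directly} to the General Intermezzo problem, where a triple $(x,y,z)$ is a one-sided, asymmetric constraint that is much easier to engineer with. It uses $k$ chains $U^1,\dots,U^k$ (one ruler per color, with a position counter $u^i_{p,j}$ carrying a sub-index $j$ that records how many pairwise checks that position has passed) plus one additional "clock" chain of elements $s^1,\dots,s^{k+1},c^{1,1},\dots,c^{k,k}$ that first forces a selection phase and then sequences the $\binom{k}{2}$ verification windows: between $c^{\ell,m}$ and $c^{\ell,m+1}$ only the two selected elements $u^\ell_{p_\ell,m}$ and $u^{m+1}_{p_{m+1},\ell}$ may and must appear, and triples indexed by the non-edges of $H$ make this impossible unless $v^\ell_{p_\ell}v^{m+1}_{p_{m+1}}\in E(H)$. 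The resulting partial order is a cs-tree of width $k+1$, and the \cl-tree statement (\cref{thm:w1-tree}) then falls out of \cref{lemma:tree-intermezzo} as a corollary rather than being the vehicle of the reduction. To complete your route you would essentially have to reconstruct this clock-and-counter machinery inside the graph, at which point you would be redoing the paper's construction under a harder constraint language; as written, the proposal is a plan, not a proof.
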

We prove this result by an \FPT-reduction from the following problem, applying a technique also used in~\cite{beisegel202?computing}.

\begin{problem}[Multicolored Clique Problem (MCP)]~
\begin{description}
\item[\textbf{Instance:}] A graph $G$ with a proper coloring of $k$ colors.
\item[\textbf{Question:}]
Is there a clique in $G$ that contains exactly one vertex of each color?
 \end{description}
\end{problem}

\noindent The MCP was shown to be \W-hard by Pietrzak~\cite{zbMATH02092872} and independently by Fellows et~al.~\cite{fellows2009param}. In fact, in \cite{cygan2015param, lokshtanov2011lower} the authors show the following result.
\begin{theorem}[Cygan et al.~\cite{cygan2015param}, Lokshtanov et al.~\cite{lokshtanov2011lower}]\label{thm:lower}
Under the assumption of the Exponential Time Hypothesis, there is no $f(k) n^{o(k)}$ time algorithm for the Multicolored Clique Problem for any computable function $f$ where $n$ is the number of vertices of the given graph.
\end{theorem}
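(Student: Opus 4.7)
The plan is to give an FPT-reduction from the Multicolor Clique Problem (MCP). Combined with \Cref{thm:lower}, this will immediately yield both the \W-hardness statement and the ETH-based $f(k)\cdot n^{o(k)}$ lower bound. Given an MCP instance on a graph $G$ with proper $k$-coloring $V(G) = V_1 \cup \dots \cup V_k$, I would construct an instance $(A, C)$ of the General Intermezzo problem such that $\pi(C)$ is a cs-tree of width $O(k)$, and such that $G$ contains a multicolor clique if and only if $(A, C)$ admits an intermezzo ordering. Since the new parameter is linear in $k$, both hardness statements transfer through the reduction. The analogous statement for the Intermezzo problem with $\pi(B, C)$ then follows via \Cref{lemma:general-to-normal}, which preserves both the cs-tree property and the width.

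For the construction I would hang $k$ long ``color chains'' below a common root of the cs-tree, one per color class. Each chain would contain one representative element per vertex of $V_i$ together with a constant number of auxiliary ``landmark'' elements that act as positional markers. Since elements of a chain must respect the chain order in any linear extension, a linear extension $\sigma$ naturally picks, for each color $i$, a unique interleaving of chain $i$ with the other chains, and I would design the chains so that this interleaving encodes a vertex choice $v_{j_i^*}^i \in V_i$. With $k$ chains and $O(1)$ auxiliaries per chain, the width of $\pi(C)$ stays in $O(k)$, and the whole structure remains a cs-tree because every chain hangs directly off the root.

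To enforce the clique property I would add intermezzo triples that couple pairs of chains: for every pair of colors $i < j$ and every non-edge $v_a^i v_b^j \notin E(G)$, a triple is added whose only violating linear extension is the one that simultaneously encodes ``color $i$ chooses $v_a^i$'' and ``color $j$ chooses $v_b^j$''. This is the main obstacle, because a single constraint $(x,y,z)$ only speaks about three elements and only forbids $x$ from lying strictly between $y$ and $z$, while the forbidden configuration is a joint statement about two different chains. The idea is to employ carefully chosen pairs of landmarks from the two chains as the bracket $(y,z)$ of the triple, and to use an element from the opposite color as the forbidden ``interior'' vertex $x$, so that the interval $(y,z)$ precisely brackets the forbidden combined choice. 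A case analysis over the relative positions of the landmarks in the two chains will be needed to confirm that every legal selection keeps $x$ outside $(y,z)$ and every illegal selection places $x$ inside.

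Correctness then reduces to two routine verifications: an intermezzo ordering induces a vertex from each color via the landmark positions, and the added triples force these vertices to be pairwise adjacent, hence form a multicolor clique; conversely, any multicolor clique yields a canonical linear extension by arranging each chain so that its landmark encodes the chosen vertex, and none of the triples is violated. Since the reduction is polynomial and the target width is $O(k)$, a hypothetical $f(k)\cdot n^{o(k)}$-time algorithm for the General Intermezzo problem would yield an algorithm of the same asymptotic form for MCP, contradicting \Cref{thm:lower}; the \W-hardness is inherited directly. Applying \Cref{lemma:general-to-normal} to the resulting instance finally transfers the hardness to the Intermezzo problem, completing the proof.
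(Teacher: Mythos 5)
Your proposal does not prove the statement in question. The statement is Theorem~\ref{thm:lower} itself: the ETH-conditional lower bound of $f(k)\,n^{o(k)}$ for the Multicolor Clique Problem. This is a result imported from the literature (Cygan et al., Lokshtanov et al.); the paper states it as a black box and offers no proof of it. Your argument instead \emph{assumes} Theorem~\ref{thm:lower} and uses it to derive the hardness of the (General) Intermezzo problem parameterized by width --- that is, you have sketched a proof of Theorem~\ref{thm:w1}, not of Theorem~\ref{thm:lower}. As a proof of the stated theorem the argument is circular: you cannot establish a lower bound for Multicolor Clique by reducing \emph{from} Multicolor Clique to some other problem. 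An actual proof of Theorem~\ref{thm:lower} would have to start from the Exponential Time Hypothesis for 3-SAT, pass through the Sparsification Lemma, and give a reduction to (Multicolor) Clique in which the parameter $k$ is kept linear in the number of variable/clause groups, so that a $f(k)\,n^{o(k)}$ algorithm for the clique problem would yield a $2^{o(n)}$ algorithm for 3-SAT. None of that machinery appears in your proposal.

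For what it is worth, as a sketch of Theorem~\ref{thm:w1} your outline is broadly consistent with the paper's actual reduction (one chain per color hanging off a common root, a selection mechanism encoding one vertex per color, and triples penalizing non-edges between chosen vertices), although the paper's gadget uses $\Theta(qk)$ elements per color chain plus a separate chain of $\Theta(k^2)$ verification elements $c^{i,j}$, rather than a constant number of landmarks per chain, and the pairwise adjacency checks are serialized along that verification chain rather than enforced by a single bracketing triple per non-edge. But that is a different theorem from the one you were asked to prove.
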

We give an \FPT-reduction from the MCP to the General Intermezzo problem parameterized by the width of $\pi(C)$. \Cref{lemma:general-to-normal} implies such a reduction also for the Intermezzo problem.

Let $G$ be an instance of the MCP with $k$ colors. W.l.o.g. we may assume that every color class has exactly $q$ elements, i.e., we assume that $V(G) = \{v^i_p~|~1 \leq i \leq k, 1 \leq p \leq q\}$. In the following, we construct an equivalent instance $(A,C)$ for the General Intermezzo problem.

First we describe the set $A$. For every $i \in \{1,\dots, k\}$ and every $p \in \{1,\dots,q\}$, we define the set $U^i_p := \{u^i_{p,j}~|~0 \leq j \leq k\}$. The set $U^i$ is defined as $U^i := \bigcup_{1 \leq p \leq q} U^i_p$. Now set $A$ is defined as follows.
\[
A:= \{s^i~|~1 \leq i \leq k+1\} \cup \{c^{i,j}~|~1 \leq i \leq j \leq k\} \cup \bigcup_{1 \leq i \leq k} U^i.
\]
In the remainder of the section, we construct the set $C$ by adding subsets of triples with specific properties. We start with some simple order constraints. For the sake of convenience, we only give a set $B$ of tuples encoding these constraints. Note that these tuples can also be encoded using triples by introducing a new element that is not allowed to be between any of the elements of those tuples.%
\begin{alignat*}{4}
B := &\{(s^i, u^i_{p,j})~&&|~1 \leq i \leq k,~0 \leq j \leq k,~1 \leq p \leq q\}~\cup \\
     &\{(u^i_{p,j}, u^i_{r,\ell})~&&|~1 \leq i \leq k,~p < r \text{ or } p = r \text{ and } j < \ell\}~\cup \\
     &\{(u^i_{1,0}, s^{i+1})~&&|~1 \leq i \leq k\}~\cup \\
     &\{(c^{i,j}, c^{\ell,m})~&&|~i < \ell \text{ or } i = \ell \text{ and } j < m\}~\cup \\
     &\{(c^{i,k}, c^{i+1,i+1})~&&|~i < k\}~\cup \\
     &\{(s^{k+1}, c^{1,1})\} &&%
\end{alignat*}

\begin{lemma}\label{lemma:width}
The reflexive and transitive closure of $B$ forms a cs-tree of width $k+1$.
\end{lemma}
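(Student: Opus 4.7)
The plan is to describe the Hasse diagram of $\pi(B)$ explicitly, verify that it is a tree rooted at $s^1$, and then count its leaves.

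First I would verify that $s^1$ is the unique minimal element: it never appears as the right-hand side of any tuple in $B$, while every other element is reached upward from $s^1$ by iteratively applying groups 1 and 3 (from $s^i$ up to $u^i_{1,0}$ and then to $s^{i+1}$), finally using group 6 to reach $c^{1,1}$ and group 4 to reach the remaining $c$-elements.

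Next I would determine, for each non-root element $x$, its unique Hasse-parent $p(x)$: namely $p(u^i_{1,0})=s^i$; $p(u^i_{p,j})$ is the immediate lexicographic predecessor of $(p,j)$ in the $U^i$-chain; $p(s^{i+1})=u^i_{1,0}$; $p(c^{1,1})=s^{k+1}$; and $p(c^{i,j})$ is the lex predecessor of $(i,j)$ in the $c$-chain. To verify that $p(x)$ is indeed the unique cover below $x$, I would write out the principal down-set $\{y : y\prec_{\pi(B)} x\}$ generated by the rules and check that $p(x)$ is its unique maximum. The only non-routine case occurs at each branching node $u^i_{1,0}$, which is covered both by $s^{i+1}$ (continuing the spine) and by $u^i_{1,1}$ (starting the $i$-th side chain). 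The key point to check is that $s^{i+1}$ and $u^i_{1,1}$ are incomparable in $\pi(B)$: the only tuples whose left component is a $u^i_{p,j}$ with $(p,j)\neq (1,0)$ are the internal $U^i$-chain tuples of group 2, so the upward closure of $u^i_{1,1}$ stays inside $U^i$ and never touches $s^{i+1}$; symmetrically, the upward closure of $s^{i+1}$ runs through $U^{i+1},\dots,U^k$ and the $c$-elements and never touches the interior of $U^i$.

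The resulting Hasse diagram is thus a tree with main spine
\[
s^1\to u^1_{1,0}\to s^2\to\cdots\to u^k_{1,0}\to s^{k+1}\to c^{1,1}\to\cdots\to c^{k,k}
\]
and, for each $i\in\{1,\dots,k\}$, a side chain $u^i_{1,0}\to u^i_{1,1}\to\cdots\to u^i_{q,k}$ attached at $u^i_{1,0}$. Its leaves are $c^{k,k}$ together with the $k$ side-chain tops $u^1_{q,k},\dots,u^k_{q,k}$, giving $k+1$ leaves in total. In any rooted tree the leaves form a maximum antichain (they are pairwise incomparable, and every interior node is an ancestor of some leaf and hence comparable to it), so $\pi(B)$ has width exactly $k+1$. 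The main obstacle is the bookkeeping at the branching nodes $u^i_{1,0}$ to confirm the two covers really are incomparable; everything else is a direct inspection of the six groups of tuples defining $B$.
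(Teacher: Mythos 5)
Your proof is correct and takes essentially the same approach as the paper: both identify the same tree structure (the spine $s^1, u^1_{1,0}, s^2, \dots, u^k_{1,0}, s^{k+1}, c^{1,1}, \dots, c^{k,k}$ with the $k$ chains $U^i$ branching off at the vertices $u^i_{1,0}$), the paper phrasing the width bound via an explicit $(k+1)$-element antichain together with a cover by $k+1$ chains rather than via leaf-counting. One small point: your parenthetical only establishes that the leaf set is a \emph{maximal} antichain, not a \emph{maximum} one; to close this, either map each element of an arbitrary antichain injectively to a descendant leaf, or argue (as the paper does) that every antichain meets each of the $k+1$ covering chains at most once.
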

\fullversion{\begin{proof}
    The following set forms an antichain of size $k+1$: $\{u^i_{1,1}~|~1 \leq i \leq k\} \cup \{s^{k+1}\}$. Thus, the width is at least $k+1$. 
    
    We can cover the set $A$ by $k+1$ chains in the following way: The set $\{s^i, u^i_{1,0}~|~1\leq i \leq k\} \cup \{s^{k+1}\} \cup \{c^{i,j}~|1\leq i \leq j \leq k\}$ forms a chain. Furthermore, for every $i \in \{1,\dots,k\}$, the set $U^i$ forms a chain. As every antichain contains at most one element of every of those $k+1$ chains, the width is at most $k+1$. It is easy to see that the union of these $k+1$ chains forms the Hasse diagram of the reflexive and transitive closure of $B$. Therefore, this closure is a cs-tree.
\end{proof}}
In the following, we will present the rest of the triples of the set $C$. First note that the last two elements of these triples will also be contained as a tuple in the reflexive and transitive closure of $B$. Thus, they do not contribute any new tuples to $\pi(C)$ and, hence, \cref{lemma:width} implies that $\pi(C)$ is a cs-tree of width $k+1$. 

We will present the new triples not all at once. Instead we present specific subsets of them. Then we will give properties that are fulfilled by any intermezzo ordering of $A$ that fulfills the constraints of $B$ and all triples presented up to that point. In any of these properties, the ordering $\sigma$ will be the respective intermezzo ordering of $A$. We divide this ordering into a \emph{selection phase} where we choose one vertex of every color to be part of the candidate clique. In the \emph{verification phase}, we check whether the chosen vertices indeed form a clique in $G$. We start with the triples for the selection phase.
\begin{alignat*}{4}
    C^1_\text{sel} := &\{(s^{i+1}, u^i_{p,j}, u^i_{p+1, 0})~&&|~1 \leq i \leq k, 1 \leq p < q, 1 \leq j \leq k\} \\
    C^2_\text{sel} := &\{(u^i_{q,j}, s^{i}, s^{i+1})~&&|~1 \leq i \leq k, 1 \leq j \leq k\} \\
    C^3_\text{sel} := &\{(u^i_{p,j}, s^{i+1}, c^{1,1})~&&|~1 \leq i \leq k, 1 \leq p \leq q, 0 \leq j \leq k\}
\end{alignat*}

\begin{property}\label{property:sel}
There exist indices $p_1, \ldots, p_k \in \{1,\ldots,q\}$ such that the prefix $\sigma'$ of $\sigma$ ending in $c^{1,1}$ fulfills the following conditions: 
\begin{enumerate}
  \item $\sigma'$ starts with $s^1$ and does not contain any vertices $c^{i,j}$ with $i \neq 1$ or $j \neq 1$.
  \item for all $i \in \{1,\dots,k\}$ it holds:\label{item:lem3-0}
  \begin{enumerate}
    \item vertex $s^i$ and $\bigcup_{r=1}^{p_i-1}U^i_{r}$ are part of $\sigma'$, \label{item:lem3-1}
    \item $U^i_{p_i} \cap \sigma' = \{u^{i}_{p_i,0}\}$,\label{item:lem3-2}
    \item none of the vertices of $U^i_{r}$ with $r > p_i$ are part of $\sigma'$.\label{item:lem3-3}
  \end{enumerate}
\end{enumerate}
\end{property}
\fullversion{\begin{proof}
    It follows from the conditions of $B$ that all the elements $s^i$ have to be to the left of $c^{1,1}$ in $\sigma$ and, thus, they are contained in $\sigma'$. $B$ also implies that there is at least one element of $U^i$ to the left of $s^{i+1}$. The triples of $C^2_\text{sel}$ imply that not all elements of $U^i$ can be to the left of $s^{i+1}$. Let $u^i_{p_i,j}$ be the rightmost element of $U^i$ to the left of $s^{i+1}$. If $p_i = q$, then $C^2_\text{sel}$ implies that $j = 0$ since none of the elements $u^i_{q,\ell}$ with $\ell > 0$ can be between $s^i$ and $s^{i+1}$. If $p_i < q$ and $j \neq 0$, then this would contradict the triples of $C^1_\text{sel}$ since $s^{i+1}$ would lie between $u^i_{p_i,j}$ and $u^i_{p_i + 1, 0}$. Thus, in any case $j = 0$. Therefore, to the left of $s^{i+1}$, all three conditions of \ref{item:lem3-0} are fulfilled. The triples of $C^3_\text{sel}$ ensure that no element of $U^i$ can be taken between $s^{i+1}$ and $c^{1,1}$. Therefore, the conditions of \ref{item:lem3-0} also hold for $\sigma'$.
\end{proof}}

\noindent We now present the first triples for the verification phase. They ensure that between certain $c$-elements only some elements are allowed to be taken. By~\cref{property:sel} we assume in the following that $p_1, \ldots, p_k$ are fixed. 
\begin{alignat*}{4}
C^1_\text{ver} &:= \{(x,c^{i,k},c^{i+1,i+1})~&&|~1 \leq i < k,~x \in A \setminus \{c^{i,k}, c^{i+1,i+1}\}\} \\
C^2_\text{ver} &:= \{(u^i_{p,j}, c^{\ell, m}, c^{\ell, m+1})~&&|~i \neq \ell \text{ and } i \neq m+1 \text{ or } i= m + 1 \text{ and } j \neq \ell \text{ or } \\&&&\phantom{|~}i = \ell \text{ and } j \neq m\} 
\end{alignat*}

\begin{property}\label{property:ver1}
   There are at most two elements in $\sigma$ between $c^{\ell, m}$ and $c^{\ell, m+1}$, namely $u^\ell_{p_\ell,m}$ and $u^{m+1}_{p_{m+1},\ell}$.
\end{property}
\fullversion{\begin{proof}
    We first observe that no element $u^i_{r,0}$ can be between $c^{1,1}$ and $c^{k,k}$ as $C^1_\text{ver}$ forbids them to be between $c^{i,k}$ and $c^{i+1,i+1}$. Furthermore, $C^2_\text{ver}$ forbids them to be between any $c^{\ell, m}$ and $c^{\ell, m+1}$ (as there is no $c^{\ell,0}$). 
    By \cref{property:sel}, for any set $U^i_r$ with $r \neq p_i$ either all elements are to the left of $c^{1,1}$ or the element $u^i_{r,0}$ is not to the left of $c^{1,1}$. Combining this with the observation above and the conditions of $B$, we know that none of the elements of $U^i_r$ with $r \neq p_i$ are between $c^{1,1}$ and $c^{k,k}$. Now, the triples of $C^2_\text{ver}$ directly imply that between $c^{\ell, m}$ and $c^{\ell, m+1}$ there can only be the elements $u^\ell_{p_\ell,m}$ and $u^{m+1}_{p_{m+1},\ell}$.
\end{proof}}
In the next step we want to ensure that $u^\ell_{p_\ell,m}$ and $u^{m+1}_{p_{m+1},\ell}$ have to be taken between $c^{\ell, m}$ and $c^{\ell, m+1}$.
\begin{alignat*}{4}
C^3_\text{ver} := &\{(c^{\ell, m +1}, u^\ell_{p,m-1}, u^\ell_{p,m})~&&|~1 \leq \ell \leq m < k, 1\leq p \leq q\}~\cup \\
             &\{(c^{\ell, m +1}, u^{m+1}_{p,\ell-1}, u^{m+1}_{p,\ell})~&&|~1 \leq \ell \leq m < k, 1\leq p \leq q\} 
\end{alignat*}

\begin{property}\label{property:ver2}
    The elements $u^\ell_{p_\ell,m}$ and $u^{m+1}_{p_{m+1},\ell}$ have to be between $c^{\ell, m}$ and $c^{\ell, m+1}$ in $\sigma$.
\end{property}
\fullversion{\begin{proof}
    We prove the property inductively. Assume that $u^\ell_{p_\ell,m-1}$ and $u^{m+1}_{p_{m+1},\ell-1}$ are to the left of $c^{\ell,m}$ and, thus, to the left of $c^{\ell,m+1}$ in $\sigma$. Note that this holds if $m = 1$ or $\ell = 1$, due to \cref{property:sel}. By \cref{property:sel,property:ver1}, 
    $u^\ell_{p_\ell,m}$ and $u^{m+1}_{p_{m+1},\ell}$ cannot be to the left of $c^{\ell,m}$ in $\sigma$. Therefore, the triples of $C^3_\text{ver}$ imply that both $u^\ell_{p_\ell,m}$ and $u^{m+1}_{p_{m+1},\ell}$ have to be taken before $c^{\ell, m+1}$.
\end{proof}}
Finally, we have to ensure that $u^\ell_{p_\ell,m}$ and $u^{m+1}_{p_{m+1},\ell}$ can only be taken if $v_p^\ell v_{p_{m+1}}^{m+1} \in E(G)$. This is ensured by the following triples.
\begin{alignat*}{4}
C^4_\text{ver} := &\{(u^{m+1}_{r,\ell}, u^\ell_{p,m}, u^\ell_{p,m+1})~&&|~1 \leq \ell \leq m < k, 1\leq p \leq q, 1 \leq r \leq q, v_p^\ell v_r^{m+1} \notin E(G)\}~\cup \\
             &\{(u^{\ell}_{p,m}, u^{m+1}_{r,\ell}, u^{m+1}_{r,\ell+1})~&&|~1 \leq \ell \leq m < k, 1\leq p \leq q, 1 \leq r \leq q, v_p^\ell v_r^{m+1} \notin E(G)\}
\end{alignat*}

\begin{property}\label{property:ver3}
   The elements $u^\ell_{p_\ell,m}$ and $u^{m+1}_{p_{m+1},\ell}$ can be between $c^{\ell, m}$ and $c^{\ell, m+1}$ in $\sigma$ only if $v_{p_\ell}^\ell v_{p_{m+1}}^{m+1} \in E(G)$.
\end{property}
\fullversion{\begin{proof}
    Assume $u^\ell_{p_\ell,m}$ and $u^{m+1}_{p_{m+1},\ell}$ are between $c^{\ell, m}$ and $c^{\ell, m+1}$ but $v_{p_\ell}^\ell v_{p_{m+1}}^{m+1} \notin E(G)$. Due to \cref{property:ver1}, no other elements are between $c^{\ell, m}$ and $c^{\ell, m+1}$. If $u^\ell_{p_\ell,m} \prec_\sigma u^{m+1}_{p_{m+1},\ell}$, then the triple $(u^{m+1}_{p_{m+1},\ell}, u^\ell_{p_\ell,m}, u^\ell_{p_\ell,m+1})\in C^4_\text{ver}$ is not fulfilled by $\sigma$.  If $u^{m+1}_{p_{m+1},\ell} \prec_\sigma u^\ell_{p_\ell,m}$, then the triple $(u^{\ell}_{{p_\ell},m}, u^{m+1}_{p_{m+1},\ell}, u^{m+1}_{p_{m+1},\ell+1})\in C^4_\text{ver}$ is not fulfilled by $\sigma$. 
\end{proof}}
Using \cref{property:sel,property:ver1,property:ver2,property:ver3}, we can prove that the described instance of the General Intermezzo problem is a feasible reduction from the MCP.

\begin{lemma}\label{lemma:w1}
    There is an intermezzo ordering $\sigma$ for $(A,C)$ if and only if $G$ has a multicolored clique of size $k$.
\end{lemma}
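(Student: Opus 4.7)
Suppose $\sigma$ is an intermezzo ordering for $(A,C)$. Since Properties~\ref{property:sel}--\ref{property:ver3} were derived from progressively larger subsets of $C$, all four hold for $\sigma$. Let $p_1,\dots,p_k$ be the indices supplied by Property~\ref{property:sel}, and set $S := \{v^i_{p_i} \mid 1 \leq i \leq k\}$; this set is multicolored by construction. For any $1 \leq \ell < i \leq k$ put $m := i-1$, so $\ell \leq m < k$. Property~\ref{property:ver2} places both $u^\ell_{p_\ell,m}$ and $u^{m+1}_{p_{m+1},\ell}$ strictly between $c^{\ell,m}$ and $c^{\ell,m+1}$ in $\sigma$, and Property~\ref{property:ver3} then converts this into $v^\ell_{p_\ell}v^i_{p_i} \in E(G)$. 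Hence $S$ is a multicolored $k$-clique.

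\textbf{Backward direction (construction).} Given a multicolored $k$-clique $\{v^1_{p_1},\dots,v^k_{p_k}\}$ in $G$, I build $\sigma$ in three consecutive blocks. In the \emph{selection block} the colors are processed in order $i = 1,\dots,k$: for each $i$, append $s^i$, then all $u^i_{r,j}$ with $r < p_i$ in lex order on $(r,j)$, and finally $u^i_{p_i,0}$. Next append $s^{k+1}$ and $c^{1,1}$. The \emph{verification block} walks through the remaining $c$-elements in their $B$-order $c^{1,2},\dots,c^{1,k},c^{2,2},\dots,c^{k,k}$; between consecutive elements $c^{\ell,m}$ and $c^{\ell,m+1}$ insert the pair $u^{m+1}_{p_{m+1},\ell},\, u^\ell_{p_\ell,m}$ (in this order), and insert nothing between $c^{i,k}$ and $c^{i+1,i+1}$. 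The \emph{tail block} then lists every remaining element (each $u^i_{p_i,k}$ and each $u^i_{r,j}$ with $r > p_i$), with each chain $U^i$ appearing in lex order on $(r,j)$.

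\textbf{Verification of the easy families.} All tuples in $B$ hold by construction. Each of $C^1_{\mathrm{sel}},C^2_{\mathrm{sel}},C^3_{\mathrm{sel}}$ admits a short case split on whether the index $p$ of the triple is less than, equal to, or greater than $p_i$; in every outcome the two relevant $U^i$-elements either both precede $s^{i+1}$ or both follow $c^{1,1}$. The family $C^1_{\mathrm{ver}}$ is immediate because nothing is placed between $c^{i,k}$ and $c^{i+1,i+1}$. For $C^2_{\mathrm{ver}}$ the only elements placed between $c^{\ell,m}$ and $c^{\ell,m+1}$ are exactly the two excluded by the definition; any other $u^i_{p,j}$ lies either in the selection block, in the tail block, or in a verification slot indexed by a strictly earlier or strictly later $c$-pair, as a short index comparison shows. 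The family $C^3_{\mathrm{ver}}$ follows because both elements of each required pair in $U^\ell$ (and symmetrically in $U^{m+1}$) are placed strictly to the left of $c^{\ell,m+1}$.

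\textbf{The main obstacle.} The decisive family is $C^4_{\mathrm{ver}}$. A triple $(u^{m+1}_{r,\ell},u^\ell_{p,m},u^\ell_{p,m+1})$ or its mate $(u^\ell_{p,m},u^{m+1}_{r,\ell},u^{m+1}_{r,\ell+1})$ can only become problematic when all three involved elements cluster around the verification slot for $c^{\ell,m}$ and $c^{\ell,m+1}$, which forces $p = p_\ell$ and $r = p_{m+1}$; in every other configuration at least one element sits in the selection or tail block and the non-betweenness is automatic. In the critical configuration the two triples demand opposite orderings of $u^\ell_{p_\ell,m}$ and $u^{m+1}_{p_{m+1},\ell}$ inside the slot, so they are jointly satisfiable precisely when neither triple belongs to $C^4_{\mathrm{ver}}$, i.e., when $v^\ell_{p_\ell}v^{m+1}_{p_{m+1}} \in E(G)$. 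The clique assumption guarantees this edge for every pair $\ell < m+1$, so the construction succeeds. This exact cancellation between the clique edges and the forbidden triples is the heart of the reduction and the main technical hurdle.
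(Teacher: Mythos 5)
Your proposal follows the paper's proof in both directions: the forward direction via \cref{property:sel,property:ver2,property:ver3} is exactly the paper's argument and is correct, and the backward construction (selection block, verification slots between consecutive $c$-elements, tail of leftovers) is the one the paper gives; the paper simply asserts that the resulting ordering is an intermezzo ordering, whereas you attempt the check explicitly.

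That explicit check is where there is a genuine gap. In your dispatch of $C^4_{\text{ver}}$ you claim that whenever one of the three elements sits in the selection or tail block the non-betweenness is automatic. This fails at the boundary $\ell = m = k-1$: the triple $(u^{k-1}_{p,k-1},\, u^{k}_{p_k,k-1},\, u^{k}_{p_k,k})$ from the second family (with $r=p_k$) has its middle element $u^{k}_{p_k,k-1}$ in the last verification slot but its third element $u^{k}_{p_k,k}$ in the tail, because the slot $(k,k)$ does not exist. For $p > p_{k-1}$ the first element $u^{k-1}_{p,k-1}$ is also in the tail, so whether it lands between the other two depends entirely on how the tail interleaves the chains; if the tail lists the leftovers of $U^1,\dots,U^k$ chain by chain in that order (as in the paper, and as your phrasing suggests), then $u^{k}_{p_k,k-1} \prec_\sigma u^{k-1}_{p,k-1} \prec_\sigma u^{k}_{p_k,k}$ and the triple is violated whenever $v^{k-1}_p v^{k}_{p_k} \notin E(G)$. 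The repair is easy but must be stated: begin the tail with the $k$ elements $u^{1}_{p_1,k},\dots,u^{k}_{p_k,k}$ before any $u^{i}_{r,j}$ with $r > p_i$. Then every triple of $C^4_{\text{ver}}$ whose middle element lies in a slot and whose third element lies in the tail has that third element in this initial segment, while its first element is either in the selection block or strictly after the initial segment (the case that it equals a clique element being excluded by the edge condition), and your case analysis goes through. The paper's own proof silently skips this verification, so you have gone further than the reference, but the blanket ``automatic'' claim as written is false and needs this adjustment.
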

\begin{proof}
    First assume that there is an intermezzo ordering $\sigma$. Let the $p_i$ be chosen as in \cref{property:sel}. Then, we define the set $K \subseteq V(G)$ as follows: $K := \{v^i_{p_i}~|~1 \leq i \leq k\}$. \Cref{property:ver2,property:ver3} imply that the set $K$ forms a clique in $G$.

    For the other direction, assume that there is a multicolored clique $K = \{v^1_{p_1}, \dots, v^k_{p_k}\}$ in~$G$. We start our intermezzo ordering in $s^1$. Now we take all the elements of $U^1$ following their ordering implied by $B$ up to $u^1_{p_1,0}$ and then we take $s^2$. We repeat this process for all $i \in \{2,\dots,k\}$. Now we take the $c$-elements following the ordering implied by $B$. Between $c^{\ell, m}$ and $c^{\ell, m+1}$, we take $u^\ell_{p_\ell,m}$ and $u^{m+1}_{p_{m+1},\ell}$ which is possible due to \cref{property:ver3} since $K$ forms a clique. Eventually, we take $c^{k,k}$. Now we first take the remaining elements of $U^1$ in their order in $B$, followed by the remaining elements of $U^2$ and so on. It is easy to check that this ordering is an intermezzo ordering.
\end{proof}
\Cref{thm:w1} follows directly from \cref{lemma:general-to-normal,lemma:w1,lemma:width}, \cref{thm:lower} as well as the \W-hardness of the Multicolored Clique Problem~\cite{fellows2009param,zbMATH02092872}.
Furthermore, \cref{thm:w1} and \cref{lemma:tree-intermezzo} imply the following.

\begin{theorem}\label{thm:w1-tree}
    The \cl-tree recognition problem of Generic Search is \W-hard if it is parameterized by the number $k$ of leaves of the spanning tree. Furthermore, assuming the Exponential Time Hypothesis, there is no algorithm that solves the problem in time $f(k) \cdot n^{o(k)}$ for any computable function $f$ where $n$ is the number of vertices of the given graph.
\end{theorem}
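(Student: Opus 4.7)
The plan is to chain the second statement of \cref{lemma:tree-intermezzo} with \cref{thm:w1}. The lemma provides a polynomial-time reduction from the General Intermezzo problem on instances $(A,C)$ where $\pi(C)$ is a cs-tree of width $k$ to the rooted \cl-tree recognition problem of Generic Search on a spanning tree having exactly $k$ branch leaves, and this reduction preserves the parameter value. Hence it qualifies as an \FPT-reduction, and composing it with the \W-hardness of \cref{thm:w1} immediately yields \W-hardness of the rooted \cl-tree recognition problem parameterized by the number of branch leaves. The ETH lower bound transfers in the same way: any $f(k)\cdot n^{o(k)}$ algorithm for \cl-tree recognition would, via the same reduction (whose blow-up is polynomial in $n$ and whose parameter is $k$), solve General Intermezzo on cs-trees in time $f(k)\cdot n^{o(k)}$, contradicting \cref{thm:w1}.

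To pass from the rooted to the unrooted variant stated in \cref{thm:w1-tree}, I would apply the three-vertex gadget from the final corollary of \cref{sec:ltree}: attach a triangle $\{a,b,c\}$ to $r$ with tree edges $ar$, $ab$, $ac$ and non-tree edge $bc$. The conflicting hooks on $a,b,c$ force one of $b,c$ to appear first in any feasible GS order, pinning $r$ as the effective root of $T$, so the unrooted instance on $(G',T')$ is equivalent to the rooted one on $(G,T)$. This gadget adds at most two leaves to the spanning tree, so the parameter grows by an additive constant and both the \W-hardness reduction and the ETH-tight lower bound are preserved.

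I do not anticipate a serious obstacle, since both ingredients already appear in the excerpt. The only point requiring care is to verify that each step is an honest \FPT-reduction: the parameter must map by a computable function (here the identity respectively $k\mapsto k+2$), and the polynomial size blow-up must not interfere with the $n^{o(k)}$ exponent in the ETH-based lower bound. Both conditions are immediate from the explicit constructions, so the assembly of \cref{thm:w1}, \cref{lemma:tree-intermezzo}, and the unrooting gadget constitutes a complete proof.
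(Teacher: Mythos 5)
Your proposal matches the paper's argument, which likewise obtains the result directly by composing \cref{thm:w1} with the second statement of \cref{lemma:tree-intermezzo}, noting that the polynomial-time equivalence preserves the parameter and hence serves as an \FPT-reduction transferring both the \W-hardness and the ETH lower bound. You are in fact slightly more careful than the paper, which leaves the passage from the rooted to the unrooted variant implicit; your reuse of the triangle gadget (which adds only two branch leaves, an additive constant in the parameter) correctly closes that small gap.
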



\section{Conclusion}

We have investigated two problems that extend a partial order to a total order while maintaining certain additional constraints. In the first problem, a spanning tree of a graph $G$ is given, which is supposed to be the \cl-tree of Generic Search on~$G$. Surprisingly, deciding this problem turned out to be \NP-complete, although numerous problems involving Generic Search, such as the associated end vertex problem, are straightforward to solve in polynomial time.
This complexity result could be used in the investigation of the second problem. Here we have shown that the General Intermezzo problem cannot be solved in polynomial time even when the Hasse diagram of the given partial order forms a tree. With respect to the width, we were able to specify an \XP-algorithm and at the same time show \W-hardness.

Several questions remain unanswered. For the GS \cl-tree problem it is not clear whether the bounds for the tree height in the \NP-completeness results are best possible. We conjecture that the problem is easy for height 2 and maybe even 3. Furthermore, we suspect that the \NP-completeness also holds for the class of bipartite graphs. While the problem is hard for split graphs, it might be solved efficiently on the subclass of threshold graphs.

Similar questions arise for the (General) Intermezzo problem: We have not shown that the height-bounds for the partial order are best possible. In particular, the bound of~36 in~\cref{corr:36} seems very high. Restricting these partial orders in other ways, e.g. lattices or interval orders, could also be used to find tractable instances of the problem. Furthermore, the complexity status of the Intermezzo problem for cs-trees of bounded height remains open.

\bibliographystyle{plainurl}
\bibliography{lit}

\begin{thebibliography}{10}

\bibitem{beisegel2019end}
Jesse Beisegel, Carolin Denkert, Ekkehard K{\"o}hler, Matja\v{z} Krnc, Nevena
  Piva\v{c}, Robert Scheffler, and Martin Strehler.
\newblock On the end-vertex problem of graph searches.
\newblock {\em {Discrete Mathematics \& Theoretical Computer Science}}, 21(1),
  2019.
\newblock \href {https://doi.org/10.23638/DMTCS-21-1-13}
  {\path{doi:10.23638/DMTCS-21-1-13}}.

\bibitem{beisegel2021recognition}
Jesse Beisegel, Carolin Denkert, Ekkehard K{\"o}hler, Matja{\v{z}} Krnc, Nevena
  Piva{\v{c}}, Robert Scheffler, and Martin Strehler.
\newblock The recognition problem of graph search trees.
\newblock {\em SIAM Journal on Discrete Mathematics}, 35(2):1418--1446, 2021.
\newblock \href {https://doi.org/10.1137/20M1313301}
  {\path{doi:10.1137/20M1313301}}.

\bibitem{arxiv}
Jesse Beisegel, Ekkehard Köhler, Fabienne Ratajczak, Robert Scheffler, and
  Martin Strehler.
\newblock Graph search trees and the intermezzo problem, 2024.
\newblock \href {https://arxiv.org/abs/2404.18645} {\path{arXiv:2404.18645}}.

\bibitem{beisegel202?computing}
Jesse Beisegel, Fabienne Ratajczak, and Robert Scheffler.
\newblock Computing {H}amiltonian paths with partial order restrictions, 2024.
\newblock \href {https://arxiv.org/abs/2404.16662} {\path{arXiv:2404.16662}}.

\bibitem{charbithabibmamcarz14}
Pierre Charbit, Michel Habib, and Antoine Mamcarz.
\newblock {Influence of the tie-break rule on the end-vertex problem}.
\newblock {\em {Discrete Mathematics \& Theoretical Computer Science}},
  16(2):57--72, 2014.
\newblock \href {https://doi.org/10.46298/dmtcs.2081}
  {\path{doi:10.46298/dmtcs.2081}}.

\bibitem{colbourn1985minimizing}
Charles~J. Colbourn and William~R. Pulleyblank.
\newblock Minimizing setups in ordered sets of fixed width.
\newblock {\em Order}, 1:225--229, 1985.
\newblock \href {https://doi.org/10.1007/BF00383598}
  {\path{doi:10.1007/BF00383598}}.

\bibitem{topologicalsort}
Thomas~H. Cormen, Charles~E. Leiserson, Ronald~L. Rivest, and Clifford Stein.
\newblock {\em Introduction to Algorithms}, chapter 22.4: Topological sort,
  pages 549--552.
\newblock MIT Press, 2nd edition, 2001.

\bibitem{corneil2008unified}
Derek~G. Corneil and Richard~M. Krueger.
\newblock A unified view of graph searching.
\newblock {\em SIAM Journal on Discrete Mathematics}, 22(4):1259--1276, 2008.
\newblock \href {https://doi.org/10.1137/050623498}
  {\path{doi:10.1137/050623498}}.

\bibitem{corneil2010}
Derek~G. Corneil, Ekkehard Köhler, and Jean-Marc Lanlignel.
\newblock On end-vertices of {L}exicographic {B}readth {F}irst {S}earches.
\newblock {\em Discrete Applied Mathematics}, 158(5):434--443, 2010.
\newblock \href {https://doi.org/10.1016/j.dam.2009.10.001}
  {\path{doi:10.1016/j.dam.2009.10.001}}.

\bibitem{cygan2015param}
Marek Cygan, Fedor~V. Fomin, {\L}ukasz Kowalik, Daniel Lokshtanov, D{\'a}niel
  Marx, Marcin Pilipczuk, Micha{\l} Pilipczuk, and Saket Saurabh.
\newblock {\em Parameterized Algorithms}.
\newblock Springer, Cham, 2015.
\newblock \href {https://doi.org/10.1007/978-3-319-21275-3}
  {\path{doi:10.1007/978-3-319-21275-3}}.

\bibitem{dilworth1987decomposition}
Robert~P. Dilworth.
\newblock A decomposition theorem for partially ordered sets.
\newblock {\em Annals of Mathematics}, 51(1):161--166, 1950.
\newblock \href {https://doi.org/10.2307/1969503} {\path{doi:10.2307/1969503}}.

\bibitem{fellows2009param}
Michael~R. Fellows, Danny Hermelin, Frances Rosamond, and Stéphane Vialette.
\newblock On the parameterized complexity of multiple-interval graph problems.
\newblock {\em Theoretical Computer Science}, 410(1):53--61, 2009.
\newblock \href {https://doi.org/10.1016/j.tcs.2008.09.065}
  {\path{doi:10.1016/j.tcs.2008.09.065}}.

\bibitem{gareyjohnson}
Michael~R. Garey and David~S. Johnson.
\newblock {\em Computers and Intractability: A Guide to the Theory of
  NP-Completeness}.
\newblock W. H. Freeman, 1979.

\bibitem{guttmann2006variations}
Walter Guttmann and Markus Maucher.
\newblock Variations on an ordering theme with constraints.
\newblock In Gonzalo Navarro, Leopoldo Bertossi, and Yoshiharu Kohayakawa,
  editors, {\em Fourth IFIP International Conference on Theoretical Computer
  Science- TCS 2006}, volume 209 of {\em IFIPAICT}, pages 77--90, Boston, MA,
  2006. Springer.
\newblock \href {https://doi.org/10.1007/978-0-387-34735-6_10}
  {\path{doi:10.1007/978-0-387-34735-6_10}}.

\bibitem{hagerup1985biconnected}
Torben Hagerup.
\newblock Biconnected graph assembly and recognition of {DFS} trees.
\newblock Technical Report A 85/03, Universit\"at des Saarlandes, 1985.
\newblock \href {https://doi.org/10.22028/D291-26437}
  {\path{doi:10.22028/D291-26437}}.

\bibitem{hagerup1985recognition}
Torben Hagerup and Manfred Nowak.
\newblock Recognition of spanning trees defined by graph searches.
\newblock Technical Report A 85/08, Universit\"at des Saarlandes, 1985.

\bibitem{hopcrafttarjan74}
John Hopcroft and Robert Tarjan.
\newblock Efficient planarity testing.
\newblock {\em Journal of the ACM}, 21(4):549–568, 1974.
\newblock \href {https://doi.org/10.1145/321850.321852}
  {\path{doi:10.1145/321850.321852}}.

\bibitem{korach1989dfs}
Ephraim Korach and Zvi Ostfeld.
\newblock {DFS} tree construction: {A}lgorithms and characterizations.
\newblock In Jan van Leeuwen, editor, {\em Graph-Theoretic Concepts in Computer
  Science -- WG '88}, volume 344 of {\em LNCS}, pages 87--106, Berlin,
  Heidelberg, 1989. Springer.
\newblock \href {https://doi.org/10.1007/3-540-50728-0_37}
  {\path{doi:10.1007/3-540-50728-0_37}}.

\bibitem{lokshtanov2011lower}
Daniel Lokshtanov, D{\'a}niel Marx, and Saket Saurabh.
\newblock Lower bounds based on the exponential time hypothesis.
\newblock {\em Bulletin of EATCS}, 105:41--71, 2011.
\newblock URL: \url{http://eatcs.org/beatcs/index.php/beatcs/article/view/92}.

\bibitem{manber1990recognizing}
Udi Manber.
\newblock Recognizing breadth-first search trees in linear time.
\newblock {\em Information Processing Letters}, 34(4):167--171, 1990.
\newblock \href {https://doi.org/10.1016/0020-0190(90)90155-Q}
  {\path{doi:10.1016/0020-0190(90)90155-Q}}.

\bibitem{zbMATH02092872}
Krzysztof Pietrzak.
\newblock On the parameterized complexity of the fixed alphabet shortest common
  supersequence and longest common subsequence problems.
\newblock {\em Journal of Computer and System Sciences}, 67(4):757--771, 2003.
\newblock \href {https://doi.org/10.1016/S0022-0000(03)00078-3}
  {\path{doi:10.1016/S0022-0000(03)00078-3}}.

\bibitem{rosetarjanlueker76}
Donald~J. Rose, R.~Endre Tarjan, and George~S. Lueker.
\newblock Algorithmic aspects of vertex elimination on graphs.
\newblock {\em SIAM Journal on Computing}, 5(2):266--283, 1976.
\newblock \href {https://doi.org/10.1137/0205021} {\path{doi:10.1137/0205021}}.

\bibitem{scheffler2022linearizing}
Robert Scheffler.
\newblock Linearizing partial search orders.
\newblock In Michael~A. Bekos and Michael Kaufmann, editors, {\em
  Graph-Theoretic Concepts in Computer Science -- WG 2022}, volume 13453 of
  {\em LNCS}, pages 425--438, Cham, 2022. Springer.
\newblock \href {https://doi.org/10.1007/978-3-031-15914-5_31}
  {\path{doi:10.1007/978-3-031-15914-5_31}}.

\bibitem{scheffler2022recognition}
Robert Scheffler.
\newblock On the recognition of search trees generated by {BFS} and {DFS}.
\newblock {\em Theoretical Computer Science}, 936:116--128, 2022.
\newblock \href {https://doi.org/10.1016/j.tcs.2022.09.018}
  {\path{doi:10.1016/j.tcs.2022.09.018}}.

\bibitem{scheffler2023graph}
Robert Scheffler.
\newblock Graph search trees and their leaves.
\newblock In Dani{\"e}l Paulusma and Bernard Ries, editors, {\em
  Graph-Theoretic Concepts in Computer Science -- WG 2023}, volume 14093 of
  {\em LNCS}, pages 462--476, Cham, 2023. Springer.
\newblock \href {https://doi.org/10.1007/978-3-031-43380-1_33}
  {\path{doi:10.1007/978-3-031-43380-1_33}}.

\bibitem{scheffler2023ready}
Robert Scheffler.
\newblock {\em Ready to Order? {O}n Vertex and Edge Orderings of Graphs}.
\newblock Doctoral thesis, BTU Cottbus-Senftenberg, 2023.
\newblock \href {https://doi.org/10.26127/BTUOpen-6301}
  {\path{doi:10.26127/BTUOpen-6301}}.

\bibitem{scheffler2024recognition}
Robert Scheffler.
\newblock Recognizing {LBFS} trees of bipartite graphs.
\newblock {\em Information Processing Letters}, 186:106483, 2024.
\newblock \href {https://doi.org/10.1016/j.ipl.2024.106483}
  {\path{doi:10.1016/j.ipl.2024.106483}}.

\bibitem{marczewski}
Edward Szpilrajn.
\newblock Sur l'extension de l'ordre partiel.
\newblock {\em Fundamenta Mathematicae}, 16:386--389, 1930.
\newblock In French.
\newblock \href {https://doi.org/10.4064/fm-16-1-386-389}
  {\path{doi:10.4064/fm-16-1-386-389}}.

\bibitem{trotter1992combinatorics}
William~T. Trotter.
\newblock {\em Combinatorics and Partially Ordered Sets: Dimension Theory}.
\newblock Johns Hopkins Series in the Mathematical Sciences. The Johns Hopkins
  University Press, Baltimore, London, 1992.

\bibitem{wolk1965comparability}
Elliot~S. Wolk.
\newblock A note on ``{T}he comparability graph of a tree''.
\newblock {\em Proceedings of the American Mathematical Society}, 16(1):17--20,
  1965.
\newblock \href {https://doi.org/10.1090/S0002-9939-1965-0172274-5}
  {\path{doi:10.1090/S0002-9939-1965-0172274-5}}.

\bibitem{zou2022end}
Meibiao Zou, Zhifeng Wang, Jianxin Wang, and Yixin Cao.
\newblock End vertices of graph searches on bipartite graphs.
\newblock {\em Information Processing Letters}, 173:106176, 2022.
\newblock \href {https://doi.org/10.1016/j.ipl.2021.106176}
  {\path{doi:10.1016/j.ipl.2021.106176}}.

\end{thebibliography}

\end{document}